\renewcommand*{\bm}[1]{#1}%
\newtheorem{theorem}{Theorem}
\newtheorem{proposition}{Proposition}
\newtheorem{lemma}{Lemma}
\DeclareMathOperator{\tr}{tr}
\def\R{{\mathbb R}}  
\newcommand{\C}{\mathbb{C}} 
\newcommand{\NE}{\bm{\mathrm{H}}}
\newcommand{\FINGER}{\widehat{\bm{\mathrm{H}}}}
\newcommand{\TY}{\bm{\mathrm{T}}}
\newcommand{\MT}{\widehat{\bm{\mathrm{T}}}}
\newcommand{\RP}{\bm{\mathrm{R}}}
\journal{}
\begin{document}

\begin{frontmatter}


\title{Fast computation of von Neumann entropy for large-scale graphs via quadratic approximations}



\author{Hayoung Choi\corref{cor1}}
\ead{hchoi@shanghaitech.edu.cn}
\cortext[cor1]{Corresponding author.}

\author{Jinglian He\corref{cor2}}
\ead{hejl1@shanghaitech.edu.cn}

\author{Hang Hu}
\ead{huhang@shanghaitech.edu.cn}

\author{Yuanming Shi}
\ead{shiym@shanghaitech.edu.cn}

\address{School of Information Science and Technology, 
    ShanghaiTech University}

\begin{abstract}
The von Neumann graph entropy (VNGE) can be used as a measure of graph complexity, which can be the measure of information divergence and distance between graphs. However, computing VNGE is extensively demanding for a large-scale graph. 
 We propose novel quadratic approximations for fast computing VNGE. Various inequalities for error between the quadratic approximations and the exact VNGE are found.
 Our methods reduce the cubic complexity of VNGE to linear complexity. 
Computational simulations on random graph models and various real network datasets demonstrate superior performance.
\end{abstract}

\begin{keyword}
Von Neumann entropy \sep Von Neumann graph entropy \sep graph dissimilarity \sep density matrix

\MSC[2010] 05C12 \sep 05C50 \sep 05C80 \sep 05C90 \sep 81P45 \sep 81P68

\end{keyword}

\end{frontmatter}


\section{Introduction}
\label{sec:intro}

Graph is one of the most common representations of complex data.
It has the sophisticated capability of representing and summarizing irregular structural features. 
Today, graph-based learning has become an emerging and promising method with numerous applications in many various fields.
With the increasing quantity of graph data, it is crucial to find a way to manage them effectively. Graph similarity as a typical way of presenting the relationship between graphs have been vastly applied \cite{WICKER20132331,DAS2017305,8445612,ABDOLLAHI2015401,2018arXiv180700252C}. 
For example, Sadreazami et al. proposed an intrusion detection methodology based on learning graph similarity with a graph Laplacian matrix \cite{8027126}; also, Yanardag and Vishwanathan gave a general framework to smooth graph kernels based on graph similarity \cite{Yanardag}.
However, all the above approaches relied on presumed models, and thus limited their ability of being applied on comprehending the general concept of divergences and distances between graphs. 

Meanwhile, graph entropy \cite{PhysRevE.80.045102} which is a model-free approach, has been actively used as a way to quantify the structural complexity of a graph. 
By regarding the eigenvalues of the normalized combinatorial Laplacian of a graph as a probability distribution, we can obtain its Shannon entropy.
By giving the density matrix of a graph, which is the representation of a graph in a quantum mechanical state, we can calculate its von Neumann entropy for a graph \cite{PhysRevE.83.036109}. 
Bai et al. proposed an algorithm to solve the depth-based complexity characterization of the graph using von Neumann entropy \cite{6460767}. Liu et al. gave a method of detecting a bifurcation network event based on von Neumann entropy \cite{8461400}. Other applications include graph clustering \cite{6460766}, network analysis \cite{8434737}, and structural reduction of multiplex networks \cite{articleehnail}. 
However, the cost of computing the exact value is highly expensive for a large-scale graph.

\subsection{Von Neumann entropy}
The von Neumann entropy, which was introduced by John von Neumann, is the extension of classical entropy concepts to the field of quantum mechanics \cite{hall2013quantum}. 
He introduced the notion of the density matrix, 
which facilitated the extension of the tools of classical statistical mechanics to the quantum domain in order to develop a theory of quantum measurements. 

Denote the trace of a square matrix $\bm{A}$ as $\tr{\bm{A}}$.
A \emph{density matrix} is a Hermitian positive semidefinite with unite trace.
The density matrix $\bm{\rho}$ is a matrix that describes the statistical state of a system in quantum mechanics.
 The density matrix is especially useful for dealing with mixed states, which consist of a statistical ensemble of several different quantum systems. 

The {\emph{von Neumann entropy}} of a density matrix $\bm{\rho}$, denoted by $\NE (\bm{\rho})$, is defined as
\begin{equation*}
\NE (\bm{\rho}) 
 =   - \tr (\bm{\bm{\rho}}\ln{\bm{\bm{\rho}}}) 
 = - \sum_{i=1}^n \lambda_i \ln \lambda_i,
\end{equation*}
where $\lambda_1,\ldots, \lambda_n$ are eigenvalues of $\bm{\bm{\rho}}$. 
It is conventional to define $0\ln{0} = 0$.
This definition is a proper extension of both the Gibbs entropy and the Shannon entropy to the quantum case.

\subsection{Von Neumann graph entropy}
	
	In this article we consider only undirected simple graphs with non-negative edge weights. 
	Let $G=(V,E,\bm{W})$ denote a graph with the set of vertices $V$ and the set of edges $E$, and the weight matrix $\bm{W}$.
	The \emph{combinatorial graph Laplacian matrix} of $G$ is defined as $\bm{L}(G)=\bm{S}-\bm{W}$, where $\bm{S}$ 
	is a diagonal matrix and its diagonal entry $s_i = \sum_{j=1}^n \bm{W}_{ij}$.
	The {\emph{density matrix}} of a graph $G$ is defined as
		$$\bm{\bm{\rho}}_G =  \frac{1}{\tr{(\bm{L}(G))}} \bm{L}(G),$$
	where $\frac{1}{\tr{(\bm{L}(G))}}$ is a trace normalization factor. 
Note that $\bm{\rho}_G$ is a positive semidefinite matrix with unite trace.
	The {\emph{von Neumann entropy}} for a graph $G$, denoted by $\NE(G)$, 
	is defined as
	\begin{equation*}
		\NE (G) :=  \NE(\bm{\rho}_G),
	\end{equation*}
	where $\bm{\rho}_G$ is the density matrix of $G$. 
	It is also called \emph{von Neumann graph entropy} (VNGE). 
	Computing von Neumann graph entropy requires the entire eigenspectrum $\{ \lambda_i \}_{i=1}^n$ of $\bm{\rho}_G$.
	This calculation can be done with time complexity $\mathcal{O}(n^3)$ \cite{DBLP:books/daglib/0086372}, making it computationally impractical for large-scale graphs.

For example, the von Neumann graph entropy have been proved to be an feasible approach in the computation of Jensen-Shannon distance between any two graphs from a graph sequence \cite{articleehnail}.
However, in the process of machine learning and data mining tasks, a sequence of large-scale graphs will be involved. Therefore, it is of great significance to find an efficient method to compute the von Neumann entropy of large-scale graphs faster than the previous $\mathcal{O}(n^3)$ approach. More details about the application of Jensen-Shannon distance will be shown in Section \ref{sec:applicatoins}.
	To tackle this challenge about computational inefficiency, Chen et al. \cite{2018arXiv180511769C} proposed a fast algorithm for computing von Neumann graph entropy, which uses a quadratic polynomial to approximate the term $-\lambda_i \ln \lambda_i$ rather than extracting the eigenspectrum. 	
	It was shown that the proposed approximation is more efficient than the exact algorithm based on the singular value decomposition. 
Although
it is true that our work was inspired by \cite{2018arXiv180511769C},
the prior work needs to calculate the largest eigenvalue for the approximation.
Our proposed methods does not need the largest eigenvalue to approximate VNGE, so the computational cost is slightly better than the prior work.
Moreover, our proposed methods have superior performances in random graphs as well as real datasets with linear complexity.

\section{Quadratic Approximations}
\label{sec:main}
For a Hermitian matrix $\bm{A} \in \C^{n\times n}$ it is true that $\tr{ f(\bm{A}) } =\sum_{i} f(\lambda_i)$ where $\{ \lambda_i \}_{i=1}^n$ is the eigenspectrum of $\bm{A}$. 
Since $\NE (\bm{\bm{\rho}}) =   - \tr (\bm{\bm{\rho}}\ln{\bm{\bm{\rho}}})$, one natural approach to approximate the von Neumann entropy of a density matrix is to use a Taylor series expansion to approximate the logarithm of a matrix. It is required to calculate $\tr{( \bm{\bm{\rho}}^j )}$, $j \leq N$ for some positive integer $N$.
Indeed, $\ln(\bm{I}_n - \bm{A}) = - \sum_{j=1}^{\infty} \bm{A}^j / j$ for a Hermitian matrix $\bm{A}$ whose eigenvalues are all in the interval $(-1,1)$.
Assuming that all eigenvalues of a density matrix $\bm{\bm{\rho}}$ are nonzeros, we have 
$$\NE(\bm{\bm{\rho}}) = -\ln{\lambda_{\max}} + \sum_{j=1}^{N} \frac{1}{j}   \tr{ \big(    \bm{\rho}(\bm{I}_n - (\lambda_{\max})^{-1} \bm{\rho})^j  \big)  } ,  $$
where $\lambda_{\max}$ is the maximum eigenvalue of $\bm{\rho}$.
We refer to \cite{2018arXiv180101072K} for more details. 
However, the computational complexity is $\mathcal{O}(n^3)$, so it is impractical as $n$ grows. 

In this article, we propose quadratic approximations to approximate the von Neumann entropy for large-scale graphs. It is noted that only $\tr{  (\bm{\bm{\rho}}^2 )  }$ is needed to compute them.
We consider various quadratic polynomials $f_{app} (x)=c_2x^2+c_1x+c_0 $ to approximate $f(x)=-x\ln{x}$ on $(0,1]$ ($f(0)=0$). Then 
$
\NE (\bm{\bm{\rho}}) 
 =   \tr (f( \bm{\rho}) ) \approx  \tr (f_{app}(\bm{\rho}) )
 = \sum_{i=1}^n  f_{app}(\lambda_i)=c_2 \tr{(\bm{\bm{\rho}}^2)} + c_1 + c_0n.
$
Such approximations are required to be considered on the only values in $[0,1]$ such that their sum is 1, which are eigenvalues of a given density matrix.

Recall that $\tr{(\bm{\rho}^2)}$ is called {\emph{purity}} of $\bm{\rho}$ in quantum information theory. 
The purity gives information on how much a state is mixed. 
For a given graph $G$, the purity of $\bm{\bm{\rho}}_G$ can be computed efficiently due to the sparsity of the  $\bm{\bm{\rho}}_G$ as follows.
\begin{lemma}\label{lemma:quadratic_trace1}
For a graph $G=(V,E,\bm{W}) \in \mathcal{G}$,
\begin{equation*}
\tr{( \bm{\bm{\rho}}_G^2 )}= \frac{1}{(\tr{( \bm{L})})^2 } \Bigg(   \sum_{i\in V} \bm{S}_{ii}^2 + 2\sum_{(i,j)\in E} {\bm{L}_{ij}^2}     \Bigg).
\end{equation*}
\end{lemma}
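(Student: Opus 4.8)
The plan is to reduce the whole computation to the trace of $\bm{L}^2$ and then read off the entrywise contributions. Since the density matrix is $\bm{\rho}_G = \frac{1}{\tr(\bm{L})}\bm{L}$ and the normalization factor $\frac{1}{\tr(\bm{L})}$ is a scalar, we have $\tr(\bm{\rho}_G^2) = \frac{1}{(\tr(\bm{L}))^2}\tr(\bm{L}^2)$. Thus it suffices to establish the matrix identity
\begin{equation*}
\tr(\bm{L}^2) = \sum_{i\in V}\bm{S}_{ii}^2 + 2\sum_{(i,j)\in E}\bm{L}_{ij}^2,
\end{equation*}
after which dividing by $(\tr(\bm{L}))^2$ immediately gives the stated formula.

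The next step is to expand the trace entrywise. By definition $\tr(\bm{L}^2) = \sum_{i,j}\bm{L}_{ij}\bm{L}_{ji}$. Because $\bm{W}$ is the weight matrix of an undirected graph it is symmetric, and $\bm{S}$ is diagonal, so $\bm{L}=\bm{S}-\bm{W}$ is symmetric; hence $\bm{L}_{ji}=\bm{L}_{ij}$ and the sum collapses to $\tr(\bm{L}^2)=\sum_{i,j}\bm{L}_{ij}^2$. I would then split this double sum into its diagonal and off-diagonal parts, $\sum_i \bm{L}_{ii}^2 + \sum_{i\neq j}\bm{L}_{ij}^2$, and treat each separately. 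On the diagonal, since $G$ is a simple graph with no self-loops we have $\bm{W}_{ii}=0$, so $\bm{L}_{ii}=\bm{S}_{ii}$ and the diagonal contributes exactly $\sum_{i\in V}\bm{S}_{ii}^2$. Off the diagonal, $\bm{L}_{ij}=-\bm{W}_{ij}$ is nonzero precisely on the edges of $G$; invoking symmetry once more, each edge $(i,j)\in E$ is counted twice in the unordered sum over $i\neq j$ (as $(i,j)$ and $(j,i)$), so $\sum_{i\neq j}\bm{L}_{ij}^2 = 2\sum_{(i,j)\in E}\bm{L}_{ij}^2$. Adding the two pieces gives the desired identity.

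Honestly, this computation is routine linear algebra, so I do not expect a genuine mathematical obstacle; the only points requiring care are bookkeeping conventions. First, the reduction to $\sum_{i,j}\bm{L}_{ij}^2$ relies essentially on the symmetry of $\bm{L}$, which in turn uses the undirectedness of $G$. Second, the factor of $2$ in front of the edge sum is exactly the subtlety to watch: it comes from the convention that $E$ lists each undirected edge once, whereas the double trace sum ranges over both orderings $(i,j)$ and $(j,i)$. Verifying that $\bm{W}_{ii}=0$ (no self-loops) is what lets the diagonal term be written purely in terms of $\bm{S}_{ii}$ rather than $\bm{S}_{ii}-\bm{W}_{ii}$, so I would state the simple-graph hypothesis explicitly at the point where the diagonal is simplified.
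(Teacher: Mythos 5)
Your proof is correct and follows essentially the same route as the paper: both reduce the claim to $\tr(\bm{L}^2)=\sum_{i,j}\bm{L}_{ij}^2$ via symmetry of $\bm{L}$ (the paper phrases this through the Frobenius norm) and then split into diagonal and off-diagonal parts. You additionally spell out why $\bm{L}_{ii}=\bm{S}_{ii}$ and why the off-diagonal sum restricts to edges with a factor of $2$, details the paper leaves implicit.
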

\begin{proof}
Since $\bm{L}$, $\bm{\bm{\rho}}_G$ are symmetric, it follows that 
\begin{align*}
\tr{( \bm{\bm{\rho}}_G^2 )} 
&= ||\bm{\bm{\rho}}_G ||_F^2 
=  \frac{1}{(\tr{\bm{L}})^2}  ||\bm{L} ||_F^2 \\
&=  \frac{1}{(\tr{\bm{L}})^2}  \sum_{1\leq i,j\leq n} \bm{L}_{ij}^2 \\
&=   \frac{1}{(\tr{\bm{L}})^2}  \Bigg(  \sum_{1\leq i\leq n} \bm{L}_{ii}^2 + 2\sum_{1\leq i<j\leq n} \bm{L}_{ij}^2 \Bigg).
\end{align*}
\end{proof}
It is trivial that $\tr{( \bm{\bm{\rho}}_G^2 )}$ only depends on the edge weights in $G=(V,E,\bm{W})$, resulting in linear computation complexity $\mathcal{O}(n+m)$, where $|V|=n$ and $|E|=m$.

We denote the maximum eigenvalue of a given density matrix as $\lambda_{\max}$.
When the eigenspectrum of $\bm{\rho}$, $\{ \lambda_i \}_{i=1}^n$, is given, we denote $\NE(\bm{\rho})$ as $\NE(\lambda_1,\ldots, \lambda_n)$. 
We will use them interchangeably.

\begin{proposition}\label{thm:purity}
The following are true.
\begin{itemize}
\item[(1)] $2\lambda_{\max}(1-\lambda_{\max}) \leq \NE(\lambda_{\max},1-\lambda_{\max}) \leq \NE(\lambda_1,\ldots, \lambda_n)$.

\item[(2)] $1/n \leq \tr{( \bm{\bm{\rho}}^2 )} \leq  \lambda_{\max}^2 + (1-\lambda_{\max})^2$.

\item[(3)] $\tr(-\bm{\rho}^2 \ln{\bm{\rho}}) \leq - \tr{( \bm{\rho}^2 )} \ln{| \tr{( \bm{\rho}^2 )} |}$.

\end{itemize}
\end{proposition}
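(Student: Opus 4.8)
The plan is to diagonalize $\bm{\rho}$ once and reduce all three parts to scalar inequalities in its eigenvalues $\lambda_1,\ldots,\lambda_n$, which satisfy $\lambda_i\ge 0$ and $\sum_i\lambda_i=1$. Using the spectral identity $\tr(f(\bm{\rho}))=\sum_i f(\lambda_i)$, I would rewrite $\tr(-\bm{\rho}^2\ln\bm{\rho})=-\sum_i\lambda_i^2\ln\lambda_i$ and $\tr(\bm{\rho}^2)=\sum_i\lambda_i^2$, restricting attention to the nonzero eigenvalues via the convention $0\ln 0=0$. For the left inequality of part (1), I would start from the elementary bound $-x\ln x\ge x(1-x)$ on $[0,1]$, which is just $\ln x\le x-1$ rescaled; applying it to $x=\lambda_{\max}$ and to $x=1-\lambda_{\max}$ and adding yields $\NE(\lambda_{\max},1-\lambda_{\max})\ge\lambda_{\max}(1-\lambda_{\max})+(1-\lambda_{\max})\lambda_{\max}=2\lambda_{\max}(1-\lambda_{\max})$. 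For the right inequality, I would use the grouping property of Shannon entropy: treating $\lambda_{\max}$ as one block and the remaining eigenvalues (summing to $1-\lambda_{\max}$) as the other, one has $\NE(\lambda_1,\ldots,\lambda_n)=\NE(\lambda_{\max},1-\lambda_{\max})+(1-\lambda_{\max})\NE(\cdots)$, and the trailing conditional entropy is nonnegative.

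For part (2), the lower bound is immediate from Cauchy--Schwarz, $1=(\sum_i\lambda_i)^2\le n\sum_i\lambda_i^2$. The upper bound splits off the top eigenvalue as $\sum_i\lambda_i^2=\lambda_{\max}^2+\sum_{i\neq\max}\lambda_i^2$ and then uses the observation that for nonnegative reals the cross terms are nonnegative, so $\sum_{i\neq\max}\lambda_i^2\le\big(\sum_{i\neq\max}\lambda_i\big)^2=(1-\lambda_{\max})^2$.

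Part (3) is the crux. First I would note that $\tr(\bm{\rho}^2)=\sum_i\lambda_i^2\ge 0$, so $|\tr(\bm{\rho}^2)|=\tr(\bm{\rho}^2)$ and the claim reduces to $-\sum_i\lambda_i^2\ln\lambda_i\le-\big(\sum_i\lambda_i^2\big)\ln\big(\sum_i\lambda_i^2\big)$. The key step is to recognize this as an instance of the log-sum inequality $\sum_i a_i\ln(a_i/b_i)\ge\big(\sum_i a_i\big)\ln\big((\sum_i a_i)/(\sum_i b_i)\big)$ under the substitution $a_i=\lambda_i^2$, $b_i=\lambda_i$: the left-hand side collapses to $\sum_i\lambda_i^2\ln\lambda_i$, and since $\sum_i b_i=\sum_i\lambda_i=1$, the right-hand side becomes $\big(\sum_i\lambda_i^2\big)\ln\big(\sum_i\lambda_i^2\big)$. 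Negating gives exactly the desired bound. I expect the main obstacle to be spotting this particular choice $b_i=\lambda_i$, which is precisely what exploits the unit-trace constraint $\sum_i\lambda_i=1$; once it is in hand, the log-sum inequality itself is a routine consequence of the convexity of $t\mapsto t\ln t$ via Jensen's inequality.
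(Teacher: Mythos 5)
Your proposal is correct and follows essentially the same route as the paper: the left bound in (1) via $\ln x\le x-1$, the upper bound in (2) by splitting off $\lambda_{\max}$, and (3) via Jensen's inequality for the (con)convexity of $t\mapsto t\ln t$ with weights $\lambda_i$ --- your log-sum inequality with $b_i=\lambda_i$ is exactly the paper's Jensen step in disguise. The only differences are cosmetic or additive: you replace the paper's subadditivity-plus-induction argument for the right bound in (1) with the equivalent entropy grouping identity, and you actually prove the lower bound $1/n\le\tr(\bm{\rho}^2)$ by Cauchy--Schwarz where the paper merely cites a reference.
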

\begin{proof}
Let $\lambda_n \leq \ldots \leq \lambda_1$ be eigenvalues of a given density matrix $\bm{\rho}$.
\begin{itemize}
\item[(1)] Even though this is one of known properties, we provide a proof here. Let $f(x)=-x\ln{x}$. It is true that $f(p_1+p_2) \leq f(p_1) + f(p_2)$ for all $p_1,p_2\geq 0$. Indeed,
\begin{align*}
  f(p_1) + f(p_2)& -f(p_1+p_2)
     = -p_1 \ln{p_1} -p_2 \ln{p_2} +(p_1+p_2) \ln{(p_1+p_2)}\\
     &= p_1(\ln{(p_1+p_2)} - \ln{p_1})  +
     p_2(\ln{(p_1+p_2)}-\ln{p_2}) \geq 0.
\end{align*}
Thus, by induction we have
$$ f(\lambda_2+\cdots + \lambda_n)  \leq f(\lambda_2) + \cdots + f(\lambda_n).$$
Therefore, $f(\lambda_1) +  f(\lambda_2+\cdots + \lambda_n)  \leq f(\lambda_1) + \cdots + f(\lambda_n) = \NE(\bm{\rho})$.

The first inequality holds from the fact that $x-1 \geq \ln{x}$ for all $x>0$.
Thus, 
\begin{align*}
-\lambda_{\max} \ln{\lambda_{\max}} - (1-\lambda_{\max})\ln{(1-\lambda_{\max})} 
& \geq -\lambda_{\max}(\lambda_{\max}-1) -(1-\lambda_{\max})(-\lambda_{\max})\\
& = 2\lambda_{\max}(1-\lambda_{\max}).
\end{align*}

\item[(2)]  Since $\tr( \bm{\rho}^2) \leq \lambda_1^2 +(\lambda_2 +\cdots + \lambda_n)^2$,  clearly $\tr( \bm{\rho}^2) \leq \lambda_1^2 +(1- \lambda_1)^2$. For a proof of the first inequality, see \cite{Jaeger2007}.

\item[(3)] 
By Proposition \ref{thm:purity} (2), clearly $1/n \leq \tr{( \bm{\rho}^2 )} \leq 1$. 
Since $f(x)=-x\ln{x}$ is concave on $[0,1]$, by Jensen's inequality, it follows that
\begin{align*}
f(\tr{( \bm{\rho}^2 )}) 
= f \Bigg( \sum_{i=1}^n \lambda_i^2 \Bigg) 
\geq \sum_{i=1}^n \lambda_i f(\lambda_i)
= \sum_{i=1}^n -\lambda_i^2 \ln{\lambda_i}.
\end{align*}

\end{itemize}
\end{proof}

\begin{figure}[t!]
  \centering
    \includegraphics[width=0.8\textwidth]{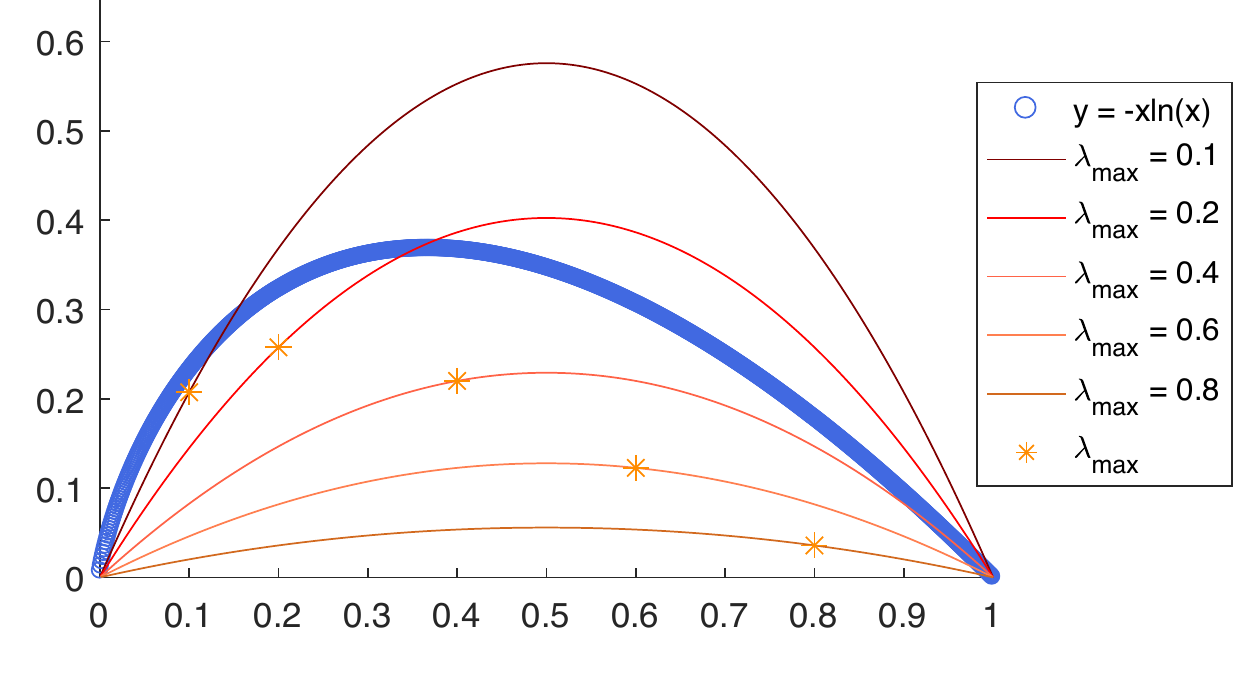}
     \caption{FINGER-$\FINGER$: graphs of the quadratic function in \eqref{eq:finger_app} with different $\lambda_{\max}$ are shown. } \label{fig:FINGER}
\end{figure}

This article considers the following quadratic approximations for von Neumann entropy:
(i) FINGER-$\FINGER$; (ii) Taylor-$\TY$; (iii) Modified Taylor-$\MT$; (iv) Radial Projection-$\RP$.
Note that they all can be computed by the purity of density matrix of a graph. Additionally (i) and (iii) need to compute the maximum eigenvalue as well. 
\subsection{FINGER-\texorpdfstring{$\FINGER$}{Lg}}

Chen et al.\cite{2018arXiv180511769C} proposed 
\emph{a fast incremental von Neumann graph entropy} (FINGER) to reduce the cubic complexity of von Neumann graph entropy to linear complexity.
They considered the following quadratic polynomial to approximate $f(x)=-x\ln{x}$ (see Fig. \ref{fig:FINGER}).
\begin{equation}\label{eq:finger_app}
q(x)=-(\ln{\lambda_{\max}})x(1-x) \quad \text{on }[0,1].
\end{equation}
Then FINGER, denoted by $\FINGER$, is defined as
\begin{align*}
\FINGER(G)
&:= \sum_{i=1}^n q(\lambda_i)
= -(\ln{\lambda_{\max}})(1-  \tr(\bm{\rho}_G^2)),
\end{align*}
where $\lambda_1,\ldots, \lambda_n$ are the eigenvalues of $\bm{\bm{\rho}}_G$.

Note that since all eigenvalues are smaller than or equal to the maximum eigenvalue, 
it is better to deal with the functions $f$ and $q$ on the interval $[0,\lambda_{\max}]$ instead of $[0,1]$.
Now let us show that the approximation FINGER is always smaller than the exact von Neumann entropy. 
\begin{lemma}\label{lemma:FINGER_app}
The following are true.
\begin{itemize}
\item[(1)] $f-q$ is concave on $[ 0, \lambda_{max} ]$.

\item[(2)] $f(x) - q(x) \geq \lambda_{\max} f(x)$ on $[ 0, \lambda_{max} ]$.
\end{itemize}
\end{lemma}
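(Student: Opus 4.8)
The plan is to prove both parts by elementary calculus, reducing each to a one-variable inequality in $\lambda_{\max}$ that follows from the basic bound $\ln t \le t-1$. Throughout I would use that $\bm{\rho}$ is a density matrix, so its eigenvalues lie in $[0,1]$ and in particular $\lambda_{\max}\le 1$, whence $\ln\lambda_{\max}\le 0$. I also record the derivatives $f'(x)=-\ln x-1$, $f''(x)=-1/x$, and, from $q(x)=-(\ln\lambda_{\max})(x-x^2)$, that $q''(x)=2\ln\lambda_{\max}$.

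For part (1), I would compute directly that $(f-q)''(x)=-\tfrac{1}{x}-2\ln\lambda_{\max}$. Since $x\mapsto -1/x$ is increasing, this second derivative is increasing on $(0,\lambda_{\max}]$ and hence attains its maximum at the right endpoint $x=\lambda_{\max}$. Therefore concavity on the whole interval reduces to the single inequality $(f-q)''(\lambda_{\max})=-\tfrac{1}{\lambda_{\max}}-2\ln\lambda_{\max}\le 0$, i.e. $2\ln(1/\lambda_{\max})\le 1/\lambda_{\max}$. Substituting $t=1/\lambda_{\max}\ge 1$, this becomes $2\ln t\le t$, which I would obtain by writing $2\ln t=4\ln\sqrt t\le 4(\sqrt t-1)$ via $\ln s\le s-1$ at $s=\sqrt t$, and then noting $t-4(\sqrt t-1)=(\sqrt t-2)^2\ge 0$.

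For part (2), I would first rewrite the claim as $(1-\lambda_{\max})f(x)\ge q(x)$. At $x=0$ both sides vanish, so for $x\in(0,\lambda_{\max}]$ I divide by $x>0$ and reduce to showing $\phi(x):=-(1-\lambda_{\max})\ln x+(\ln\lambda_{\max})(1-x)\ge 0$. A short computation gives $\phi(\lambda_{\max})=0$ and $\phi''(x)=(1-\lambda_{\max})/x^2\ge 0$, so $\phi$ is convex. It then suffices to check $\phi'(\lambda_{\max})\le 0$: convexity forces $\phi'(x)\le\phi'(\lambda_{\max})\le 0$ on $(0,\lambda_{\max}]$, so $\phi$ is nonincreasing there and hence $\phi(x)\ge\phi(\lambda_{\max})=0$. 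Finally $\phi'(\lambda_{\max})=1-\tfrac{1}{\lambda_{\max}}-\ln\lambda_{\max}$, so with $t=1/\lambda_{\max}$ the inequality $\phi'(\lambda_{\max})\le 0$ is precisely $\ln t\le t-1$.

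The differentiation and endpoint evaluations are routine; the only real obstacle is handling the transcendental inequalities cleanly, and the idea that makes both tractable is to push the relevant extremum (the maximum of the second derivative in part (1), the sign of the first derivative in part (2)) to the endpoint $x=\lambda_{\max}$, where everything collapses to a monotone one-variable bound derivable from $\ln t\le t-1$. A minor point worth recording is the degenerate case $\lambda_{\max}=1$, where $q\equiv 0$ and both statements hold with equality.
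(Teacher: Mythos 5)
Your proposal is correct and follows essentially the same route as the paper: both parts reduce to the sign of $(f-q)''$ and, for (2), to factoring out $x$ and showing the resulting function $\phi$ vanishes at $\lambda_{\max}$ with $\phi'\le 0$ on $(0,\lambda_{\max})$. The only differences are cosmetic — you justify the scalar inequalities via $\ln t\le t-1$ (the paper uses $\max(-x\ln x)=1/e<1/2$ and the tangent line to $x\ln x$ at $x=1$, which amount to the same bounds) and you route the derivative check in (2) through convexity of $\phi$ rather than bounding $\phi'$ directly.
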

\begin{proof}
 It is trivial for $\lambda_{\max}=1$. Suppose that $\lambda_{\max} \neq 1$.
\begin{itemize}
\item[(1)]
Note that $f(x)< 1/e <1/2$ on all $0 \leq x \leq 1$. So, $- \lambda_{\max} \ln{ \lambda_{\max}} < 1/2$. Then $\lambda_{\max} < \frac{1}{-2\ln{ \lambda_{\max}}}$, implying, 
$-1-2x \ln{ \lambda_{\max}}  < 0$ for all $0 \leq x \leq \lambda_{\max}$. Thus, 
$$(f-q)''(x)= \frac{-1-2x \ln{ \lambda_{\max}}}{x} <0.$$

\item[(2)] When $x=0$ or $x=\lambda_{\max}$, it is trivial.
Since $(1- \lambda_{\max})f(x) -q(x) = x((\lambda_{\max}-1)\ln{x}  + (\ln{\lambda_{\max}}) (1-x))$,
it suffices to show $\phi(x)=(\lambda_{\max}-1)\ln{x}  + \ln{\lambda_{\max}} (1-x) >0$ for all $0 < x < \lambda_{\max}$. By observing the tangent line at $x=1$ for $y=x\ln{x}$, it is easy to check that $\lambda_{\max} -1 < \lambda_{\max} \ln{\lambda_{\max}}$. Then 
$$\frac{\lambda_{\max}-1}{\ln{\lambda_{\max}}} > \lambda_{\max}.$$ Thus, $\phi '(x)=\frac{\lambda_{\max}-1}{x} - \ln{\lambda_{\max}} <0$  for all $0<x< \lambda_{\max}$.  
Since $\phi(\lambda_{\max})=0$ and $\phi '(x) <0$ for all $0<x<\lambda_{\max}$, it follows that $\phi(x) >0$ for all $0<x< \lambda_{\max}$.

\end{itemize}
\end{proof}

\begin{theorem}\label{thm:FINGER_eigenmax}
For any density matrix $\bm{\rho}$, it holds that
\begin{itemize}
    \item[(1)] $\FINGER(\lambda_1,\ldots, \lambda_n) \geq \FINGER(\lambda_{\max},1-\lambda_{\max})$.

    \item[(2)]  $\NE(\bm{\rho}) \geq  \FINGER(\bm{\rho})$ The equality holds if and only if $\lambda_{\max} =1$. 
    
    \item[(3)] $\NE(\bm{\rho}) -  \FINGER(\bm{\rho})  \geq  \lambda_{\max} \NE(\bm{\rho}).$
\end{itemize}
\end{theorem}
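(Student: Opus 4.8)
The plan is to reduce all three parts to two facts already in hand: the pointwise estimate in Lemma~\ref{lemma:FINGER_app}(2) and the purity bound in Proposition~\ref{thm:purity}(2). Throughout I write $\lambda_1,\ldots,\lambda_n$ for the eigenvalues of $\bm{\rho}$, so that every $\lambda_i\in[0,\lambda_{\max}]$ and $\sum_i\lambda_i=1$, and I keep the case $\lambda_{\max}=1$ separate, since there $\ln\lambda_{\max}=0$ and $q\equiv 0$.

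For (1) I would first rewrite both FINGER values in closed form. Using $\sum_i\lambda_i=1$ one gets $\FINGER(\lambda_1,\ldots,\lambda_n)=-(\ln\lambda_{\max})(1-\tr(\bm{\rho}^2))$, while the two-point expression collapses to $\FINGER(\lambda_{\max},1-\lambda_{\max})=-(\ln\lambda_{\max})\cdot 2\lambda_{\max}(1-\lambda_{\max})$. Because $0<\lambda_{\max}\le 1$ the prefactor $-\ln\lambda_{\max}$ is nonnegative, so after dividing it out (the case $\lambda_{\max}=1$ being trivial, both sides vanishing) the claim becomes $1-\tr(\bm{\rho}^2)\ge 2\lambda_{\max}(1-\lambda_{\max})$, i.e.\ $\tr(\bm{\rho}^2)\le\lambda_{\max}^2+(1-\lambda_{\max})^2$. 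This is precisely the upper bound of Proposition~\ref{thm:purity}(2), so (1) is immediate.

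For (3) I would expand the difference termwise as $\NE(\bm{\rho})-\FINGER(\bm{\rho})=\sum_{i=1}^n\big(f(\lambda_i)-q(\lambda_i)\big)$ with $f(x)=-x\ln x$. Since each $\lambda_i$ lies in $[0,\lambda_{\max}]$, Lemma~\ref{lemma:FINGER_app}(2) applies pointwise to give $f(\lambda_i)-q(\lambda_i)\ge\lambda_{\max}f(\lambda_i)$; summing over $i$ and using $\sum_i f(\lambda_i)=\NE(\bm{\rho})$ yields exactly $\NE(\bm{\rho})-\FINGER(\bm{\rho})\ge\lambda_{\max}\NE(\bm{\rho})$.

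Part (2) then comes almost for free from (3): since $\lambda_{\max}>0$ and $\NE(\bm{\rho})\ge 0$, the right-hand side of (3) is nonnegative, giving $\NE(\bm{\rho})\ge\FINGER(\bm{\rho})$. The only genuine work is the equality characterization, which I expect to be the main obstacle. If $\lambda_{\max}=1$ then the spectrum must be $(1,0,\ldots,0)$, so both $\NE(\bm{\rho})=0$ and $\FINGER(\bm{\rho})=0$ and equality holds. Conversely, if $\lambda_{\max}<1$ then $\sum_i\lambda_i=1$ forces at least two strictly positive eigenvalues, each contributing a strictly positive term to $-\sum_i\lambda_i\ln\lambda_i$, so $\NE(\bm{\rho})>0$; feeding this into (3) gives $\NE(\bm{\rho})-\FINGER(\bm{\rho})\ge\lambda_{\max}\NE(\bm{\rho})>0$, a strict inequality. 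Hence equality holds exactly when $\lambda_{\max}=1$.
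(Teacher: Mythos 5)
Your proof is correct, and parts of it take a genuinely different route from the paper. For (3) you do exactly what the paper does: sum the pointwise bound of Lemma~\ref{lemma:FINGER_app}(2) over the spectrum. For (1), however, the paper argues via subadditivity of $q$ (it shows $\phi(t_1)+\phi(t_2)\geq\phi(t_1+t_2)$ for $\phi(x)=x(1-x)$, hence $q(\sum_{i\geq 2}\lambda_i)\leq\sum_{i\geq 2}q(\lambda_i)$ by induction; incidentally, the inequality sign in the paper's displayed chain is typeset in the wrong direction, though the intended argument is sound), whereas you factor out $-\ln\lambda_{\max}\geq 0$ and reduce the claim to $\tr(\bm{\rho}^2)\leq\lambda_{\max}^2+(1-\lambda_{\max})^2$, i.e.\ to Proposition~\ref{thm:purity}(2). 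This buys a one-line proof and makes explicit that (1) is literally equivalent to the purity upper bound, at the cost of importing that proposition (whose own justification in the paper is the same superadditivity idea in disguise). For (2) the paper gives a separate pointwise argument, $f(\lambda)-q(\lambda)\geq-\lambda^2\ln\lambda\geq 0$, with a rather terse equality discussion; you instead derive (2) as a corollary of (3), and your equality characterization ($\lambda_{\max}<1$ forces at least two eigenvalues in $(0,1)$, hence $\NE(\bm{\rho})>0$, hence strict inequality via (3)) is cleaner and more complete than the paper's. The only logical dependency to keep in mind is that your (2) now rests on (3), so Lemma~\ref{lemma:FINGER_app}(2) carries all the weight; that lemma is established in the paper, so nothing is missing.
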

\begin{proof}
\begin{itemize}
    \item[(1)] Let $\phi(x)=x(1-x)$. Then it is easy to check that $ \phi(t_1) + \phi(t_2) - \phi(t_1+t_2) \geq 0$ for all $t_1,t_2 \geq 0$.
Thus,  $q(t_1 + t_2) \leq q(t_1) + q(t_2)$. By induction, it holds that
$$ q(t_1 + \cdots + t_n) \leq q(t_1) + \cdots q(t_n),$$
for all $t_i\geq 0$, $i=1,\ldots,n$.
Then it follows that
\begin{align*}
\FINGER(\lambda_1,\ldots, \lambda_n)
& = \sum_{i=1}^n q( \lambda_i) 
=  q( \lambda_1) + \sum_{i=2}^n q( \lambda_i) \\
&\leq q( \lambda_1) + q\Big( \sum_{i=2}^n \lambda_i \Big)\\
& = q( \lambda_{1}) + q(1- \lambda_{1}),
\end{align*}
where $\lambda_n \leq \cdots \leq \lambda_1$ are spectrum of $\bm{\rho}$.

    \item[(2)] Let $\lambda$ be an eigenvalue of $\bm{\rho}$.
Since $\lambda \leq \lambda_{\max}$, it follows that
\begin{align*}
f(\lambda) - q(\lambda)
&  = -\lambda \ln{\lambda} + (\ln{\lambda_{max}})\lambda(1-\lambda)\\
& \geq -\lambda \ln{\lambda} + (\ln{\lambda})\lambda(1-\lambda)\\
& = -\lambda^2 \ln{\lambda} \geq 0.
\end{align*}
Note that $0 = \lambda_n = \cdots = \lambda_2$ and $\lambda_1=1$ if and only if     $\lambda_i^2 \ln{\lambda_i} = 0$ for all $1\leq i \leq n$. 
\item[(3)] By Lemma \ref{lemma:FINGER_app} (2) we can find the  error bound for FINGER.

\end{itemize}
\end{proof}

For more information about FINGER, see
\cite{2018arXiv180511769C}.

\subsection{Taylor-\texorpdfstring{$\TY$}{Lg}}

Since the sum of all eigenvalues of density matrix is $1$, the average of them is $\frac{1}{n}$. 
As $n$ gets bigger, the average gets closer to $0$. Thus, for a large-scale $n\times n$ density matrix, many of its eigenvalues must be on $[0, \frac{1}{n}]$.
So, it is reasonable to use Taylor series for $f$ at $x=\frac{1}{n}$ instead of $x=0$. In fact, since $f'(0)$ does not exist, there does not exist Taylor series for $f$ at $x=0$. 
\begin{lemma}\label{lemma:lambda_max}
Let $ \lambda_{\max} $ be the maximum eigenvalue of a density matrix $\bm{\rho}$.
Then,
$\frac{1}{n} \leq \lambda_{\max} \leq 1$.
Especially, $ \lambda_{\max} = 1$ if and only if $\NE(\bm{\rho})=0$. Also it is true that $ \lambda_{\max} = \frac{1}{n}$ if and only if $\NE(\bm{\rho})=\ln(n)$.
\end{lemma}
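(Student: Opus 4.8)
The plan is to work directly with the ordered eigenvalues $1 \ge \lambda_1 \ge \cdots \ge \lambda_n \ge 0$ of $\bm{\rho}$, which by the defining properties of a density matrix (Hermitian, positive semidefinite, unit trace) form a probability distribution, i.e. $\lambda_i \ge 0$ and $\sum_{i=1}^n \lambda_i = 1$. The quantity $\NE(\bm{\rho}) = -\sum_{i=1}^n \lambda_i \ln \lambda_i$ is then exactly the Shannon entropy of this distribution, so the statement reduces to a pair of familiar extremal facts about Shannon entropy, read off at the maximal eigenvalue $\lambda_{\max} = \lambda_1$.

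First I would dispatch the two inequalities. For the upper bound, since every $\lambda_i \ge 0$ and they sum to $1$, we have $\lambda_{\max} = 1 - \sum_{i \ge 2} \lambda_i \le 1$. For the lower bound, summing $\lambda_i \le \lambda_{\max}$ over $i$ gives $1 = \sum_{i} \lambda_i \le n\lambda_{\max}$, hence $\lambda_{\max} \ge 1/n$. Both are one-line averaging/nonnegativity arguments.

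Next I would treat the equivalence $\lambda_{\max} = 1 \iff \NE(\bm{\rho}) = 0$. Set $f(x) = -x\ln x$. On $[0,1]$ one has $f \ge 0$, with $f(x) = 0$ precisely at $x \in \{0,1\}$. Since $\NE(\bm{\rho}) = \sum_i f(\lambda_i)$ is a sum of nonnegative terms, $\NE(\bm{\rho}) = 0$ forces every $f(\lambda_i) = 0$, i.e. each $\lambda_i \in \{0,1\}$; combined with $\sum_i \lambda_i = 1$ this means exactly one eigenvalue equals $1$ and the rest vanish, which is the same as $\lambda_{\max} = 1$. The converse is immediate: if $\lambda_{\max} = 1$ then the simplex constraint forces the remaining eigenvalues to be $0$, so $\NE(\bm{\rho}) = f(1) = 0$.

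Finally I would establish $\lambda_{\max} = 1/n \iff \NE(\bm{\rho}) = \ln n$. If $\lambda_{\max} = 1/n$, equality in the averaging bound $1 = \sum_i \lambda_i \le n\lambda_{\max}$ forces $\lambda_i = 1/n$ for every $i$, whence $\NE(\bm{\rho}) = -\sum_i \tfrac1n \ln \tfrac1n = \ln n$. For the converse I would invoke the maximum-entropy bound, applying Jensen's inequality to the strictly concave $\ln$ with weights $\lambda_i$ restricted to the support $\{i : \lambda_i > 0\}$, giving $\NE(\bm{\rho}) = \sum_i \lambda_i \ln(1/\lambda_i) \le \ln\big(\sum_{\lambda_i > 0} \lambda_i \cdot \lambda_i^{-1}\big) = \ln |\{i : \lambda_i > 0\}| \le \ln n$. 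The hard part will be the equality analysis: because $\ln$ is strictly concave, equality in Jensen forces all positive $\lambda_i$ to be equal, and equality in the final step forces the support to be all of $\{1,\dots,n\}$; together these give $\lambda_i \equiv 1/n$, hence $\lambda_{\max} = 1/n$. This strict-concavity/uniqueness step is the only place where genuine care is required, as everything else follows directly from the simplex constraint.
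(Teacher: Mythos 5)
Your proposal is correct and follows essentially the same route as the paper: both rest on the simplex constraint $\sum_i \lambda_i = 1$, $\lambda_i \ge 0$ and the standard extremal facts about Shannon entropy. The only difference is one of completeness — where the paper simply asserts ``it is known that $\NE(\bm{\rho}) \le \ln n$ with equality at the uniform spectrum,'' you actually supply the Jensen's-inequality argument and the strict-concavity equality analysis, and you also prove the bounds $1/n \le \lambda_{\max} \le 1$ and both directions of each equivalence, which the paper's two-line sketch largely leaves implicit.
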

\begin{proof}
Since $\tr{\bm{\rho}} =1$, if $\lambda_{\max}= 1$ then all eigenvalues are 0 except $\lambda_{\max}$.
It is known that $\NE(\bm{\rho}) \leq \ln(n)$ and the equality holds when $\lambda_i = \frac{1}{n}$ for all $i$.\end{proof}

We can propose the quadratic Taylor approximation for $f$ at $x=\frac{1}{n}$ as follows.
\begin{align*}
q(x) 
&= f\bigg(\frac{1}{n}\bigg) + f'\bigg(\frac{1}{n}\bigg)\bigg(x-\frac{1}{n}\bigg) + \frac{f''(\frac{1}{n})}{2!}\bigg(x-\frac{1}{n}\bigg)^2\\
&= -\frac{n}{2} x^2 + (\ln{n}) x - \frac{1}{2n}.
\end{align*}
Using such approximation, Taylor, denoted by $\TY$, is defined as
\begin{align*}
\TY(G)
&= -\frac{n}{2} \tr(\bm{\rho}_G^2) + \ln{n} -\frac{1}{2}.
\end{align*}

As Fig. \ref{fig:graph-Taylor-T} shows, 
the function $q$ is very similar to the function $f$ near $x=\frac{1}{n}$. However,
as the maximum eigenvalue gets closer to 1, the error becomes very large. Note that $f(\lambda_{\max}) - q(\lambda_{\max}) = -\lambda_{\max} \ln{\lambda_{\max}} + \frac{n}{2}\lambda_{\max}^2 - (\ln{n}) \lambda_{\max} - \frac{1}{2n} \longrightarrow \infty$ as $n\longrightarrow \infty$ for $\lambda_{\max}\approx 1$.
Alternatively, this approximation needs to be modified. We use the information about $\lambda_{\max}$ in order to reduce the error.

We assume that the eigenvalues concentrate around the mean $1/n$. Remark that this would be in general true for small-world or relatively small-world graphs, but for example in planar graphs or graphs representing a low dimensional manifold where Weyl's law holds this would not be true. In particular for planar geometry or $2D$ manifolds the smallest eigenvalues would grow linearly, and this rate would most likely hold well around $1/n$. 

\subsection{Modified Taylor-\texorpdfstring{$\MT$}{Lg}}

\begin{figure}[t!]
  \centering
    \includegraphics[width=0.8\textwidth]{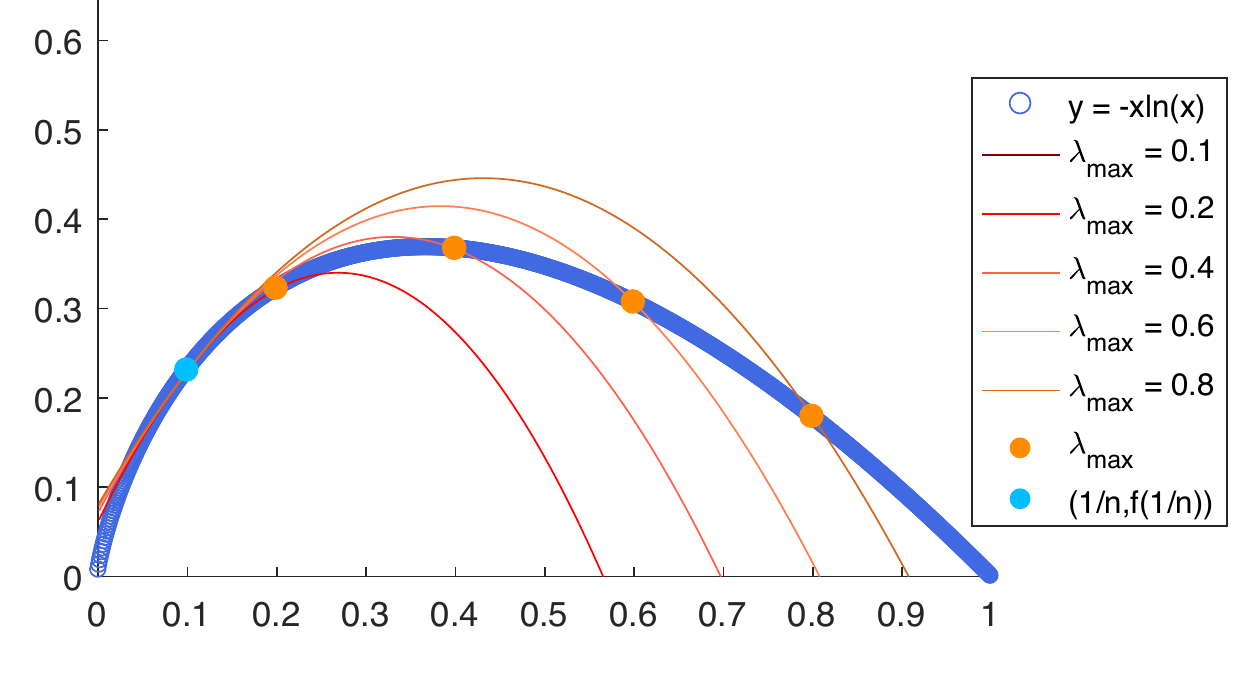}
     \caption{Modified Taylor-$\MT$: graphs of the quadratic function in \eqref{eq:modified_Taylor_app} with different $\lambda_{\max}$ are shown.}
     \label{fig:graph-Taylor-T}
\end{figure}

Consider the quadratic approximation, $q$, to approximate $f(x)=-x\ln{x}$ such that it holds
 \begin{equation*}
 q\bigg(\frac{1}{n}\bigg) = f\bigg(\frac{1}{n}\bigg),~ q'\bigg(\frac{1}{n}\bigg) = f'\bigg(\frac{1}{n}\bigg),~ q(\lambda_{\max}) = f(\lambda_{\max}),
 \end{equation*}
 assuming that $\lambda_{\max} \neq \frac{1}{n}$,
we have
\begin{equation}\label{eq:modified_Taylor_app}
q(x) 
= \sigma x^2 + \bigg(\ln{n}-1-\frac{2\sigma}{n} \bigg)x + \frac{\sigma+n}{n^2},
\end{equation}
where 
\begin{equation*}
\sigma= \frac{ -n\lambda_{\max}\ln{(n\lambda_{\max})} + n\lambda_{\max} -1}{n(\lambda_{\max}-\frac{1}{n})^2}.
\end{equation*}

Using such approximation, the Modified Taylor, denoted by $\MT$, is defined as
\begin{align*}
\MT(G)
&= \sigma \bigg( \tr(\bm{\rho}_G^2) -\frac{1}{n} \bigg) +\ln{n}.
\end{align*}

\begin{lemma}\label{lemma:modifiedT}
The following are true.
\begin{itemize}
\item[(1)] $q$ is concave on $[0, \lambda_{\max}]$.

\item[(2)] $\frac{1}{n} \leq -\frac{1}{2 \sigma} \leq  \lambda_{\max} $. 
\end{itemize}
\end{lemma}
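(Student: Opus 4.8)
The plan is to reduce both claims to elementary one-variable inequalities through the substitution $t = n\lambda_{\max}$, which by Lemma~\ref{lemma:lambda_max} together with the standing assumption $\lambda_{\max}\neq \frac1n$ ranges over $(1,n]$. Under this substitution the coefficient $\sigma$ simplifies: since $\lambda_{\max}-\frac1n = \frac{t-1}{n}$, one gets
\[
\sigma = n\,h(t), \qquad h(t) := \frac{-t\ln t + t - 1}{(t-1)^2}.
\]
Because $q$ is a quadratic with leading coefficient $\sigma$, its concavity on any interval is equivalent to $\sigma\le 0$; and once I show $\sigma<0$, the two inequalities in (2) can be cleared of denominators (reversing the inequality each time I multiply by $2\sigma<0$) to the equivalent two-sided bound $-\frac n2 \le \sigma \le -\frac{1}{2\lambda_{\max}}$, i.e. after dividing by $n$,
\[
-\tfrac12 \ \le\ h(t)\ \le\ -\tfrac{1}{2t}.
\]
So the whole lemma comes down to establishing $-\frac12 \le h(t) \le -\frac{1}{2t} < 0$ on $(1,n]$.

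For part (1) I would only need $\sigma<0$, i.e. $h(t)<0$. The denominator $(t-1)^2$ is positive, so $h(t)$ has the same sign as $g(t) := -t\ln t + t - 1$. Here $g(1)=0$ and $g'(t) = -\ln t<0$ for $t>1$, so $g$ is strictly decreasing past $1$ and hence $g(t)<0$ on $(1,n]$. This gives $\sigma<0$, so $q''=2\sigma<0$ and $q$ is (strictly) concave on $[0,\lambda_{\max}]$.

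For part (2) I would treat the two bounds separately after clearing denominators. The lower bound $-\frac12\le h(t)$ is equivalent to $\psi(t):=-t\ln t + t - 1 + \frac{(t-1)^2}{2}\ge 0$; here $\psi(1)=0$ and $\psi'(t) = (t-1)-\ln t\ge 0$ by the textbook inequality $\ln t\le t-1$, so $\psi$ is nondecreasing and stays $\ge 0$ on $[1,n]$. The upper bound $h(t)\le -\frac{1}{2t}$ is equivalent to $\chi(t):=-t\ln t + t - 1 + \frac{(t-1)^2}{2t}\le 0$; writing $\frac{(t-1)^2}{2t}=\frac t2 - 1 + \frac{1}{2t}$ one finds $\chi(1)=0$, $\chi'(t) = -\ln t + \frac12 - \frac{1}{2t^2}$ with $\chi'(1)=0$, and $\chi''(t)=\frac{1-t^2}{t^3}<0$ for $t>1$. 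Hence $\chi'$ decreases from $0$, so $\chi'<0$ and $\chi$ decreases from $0$ on $(1,n]$, giving $\chi\le 0$.

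The main obstacle is the upper bound in (2): unlike the lower bound, where a single differentiation plus $\ln t\le t-1$ suffices, the function $\chi$ has \emph{both} $\chi(1)=0$ and $\chi'(1)=0$, so a first-derivative sign argument is inconclusive at the endpoint and one is forced to pass to $\chi''$ to pin down the monotonicity of $\chi'$. Care is also needed to track the direction of every inequality when multiplying by $\sigma<0$ in the reductions of part (2); getting a sign wrong there silently swaps the two bounds.
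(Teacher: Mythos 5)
Your proof is correct and follows essentially the same route as the paper: substitute $t=n\lambda_{\max}$, deduce $\sigma<0$ from the sign of $-t\ln t+t-1$, and reduce both bounds in (2) to elementary one-variable inequalities in $t$ (your $\psi$ is exactly half of the paper's $\phi(t)=t^2-2t\ln t-1$). The one difference is that for the upper bound $-\frac{1}{2\sigma}\le\lambda_{\max}$ the paper merely asserts that the numerator $2t(t\ln t-t+1)-(t-1)^2$ is positive without justification, whereas your second-derivative argument for $\chi$ (using $\chi(1)=\chi'(1)=0$ and $\chi''<0$) actually proves the equivalent inequality, so your write-up is, if anything, more complete than the paper's.
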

\begin{proof}
\begin{itemize}
\item[(1)] 
Since $q$ is a quadratic polynomial in $x$, it suffices to show $\sigma <0$.  Let $\phi(t) = -t \ln{t} +t -1$. Since $\phi(1)=0$ and $\phi'(t) <0$ for all $t > 1$, it is true that $\phi(t) < 0$ for all $t >1$.  By Lemma \ref{lemma:lambda_max}, $n \lambda_{\max} > 1$, so  $\phi(n \lambda_{\max}) = -n \lambda_{\max} \ln{(n \lambda_{\max})} +n \lambda_{\max} -1 <0 $.

\item[(2)] Let $t=n \lambda_{\max}$. By Lemma \ref{lemma:lambda_max}, $t > 1$. Since 
\begin{equation*}
\frac{1}{-2\sigma} -\frac{1}{n} = \frac{1}{n} \Bigg( \frac{(t-1)^2}{2(t \ln{t} -t +1)}  -1 \Bigg),
\end{equation*}
It suffices only if we show $(t-1)^2 > 2(t \ln{t} -t +1)$. 
Recall that $t \ln{t} -t +1>0$ for all $t>1$.
Let $\phi(t) = t^2 -2t\ln{t} -1 $.
Since $\ln{t} < t-1$ for all $t$, $\phi'(t) = 2t-2\ln{t} -2 = 2((t-1)-\ln{t}) >0$ for all $t\geq 1$. From the fact $\phi(1)=0$, clearly, $\phi(t) > 0$ for all $t>1$.
So, $\frac{1}{n} \leq -\frac{1}{2 \sigma} $.
Taking $t = n\lambda_{\max} $, we have 
\begin{align*}
\lambda_{\max}  - \frac{1}{-2\sigma} 
&= \frac{t}{n} - \frac{(t-1)^2}{2n(t \ln{t} -t +1)} \\
&= \frac{2t(t \ln{t} -t+1) -(t-1)^2}{2n(t\ln{t} -t +1)} >0.
\end{align*}
\end{itemize}
\end{proof}

\begin{theorem}\label{thm:Modified_Taylor_error}
For any density matrix $\bm{\rho}$, it holds that
\begin{equation*}
\MT(\bm{\rho}) \geq \NE(\bm{\rho}) .
\end{equation*}
\end{theorem}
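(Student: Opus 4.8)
The plan is to reduce the trace inequality to a pointwise inequality between the two underlying scalar functions and then exploit the convexity structure of their difference. Writing $f(x) = -x\ln x$ and letting $q$ be the quadratic from \eqref{eq:modified_Taylor_app}, a direct computation (already implicit in the definition of $\MT$) gives $\MT(\bm{\rho}) = \sum_{i=1}^n q(\lambda_i)$ and $\NE(\bm{\rho}) = \sum_{i=1}^n f(\lambda_i)$. Hence it suffices to prove the pointwise bound $q(x) \geq f(x)$ for every $x \in [0,\lambda_{\max}]$: since every eigenvalue of a density matrix lies in $[0,\lambda_{\max}]$, summing this inequality over the spectrum yields $\MT(\bm{\rho}) \geq \NE(\bm{\rho})$.

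To establish $q \geq f$ on $[0,\lambda_{\max}]$, I would study $g := q - f$. By the three interpolation conditions used to define $q$, we have $g(\tfrac1n) = 0$, $g'(\tfrac1n) = 0$, and $g(\lambda_{\max}) = 0$. The second derivative is $g''(x) = 2\sigma + \tfrac1x$, which changes sign exactly at $x_0 := -\tfrac{1}{2\sigma}$; this point is well defined and positive because $\sigma < 0$ by Lemma \ref{lemma:modifiedT}~(1). Thus $g$ is convex on $[0,x_0]$ and concave on $[x_0,\lambda_{\max}]$. The crucial structural input is Lemma \ref{lemma:modifiedT}~(2), which guarantees that the inflection point is sandwiched as $\tfrac1n \leq x_0 \leq \lambda_{\max}$, so that both the interpolation node $\tfrac1n$ and the matching node $\lambda_{\max}$ sit on the correct side of $x_0$.

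With this localization the argument splits at $x_0$. On $[0,x_0]$ the function $g$ is convex and has a stationary point at $x = \tfrac1n \in [0,x_0]$ with value $g(\tfrac1n) = 0$; for a convex function a point with vanishing derivative is the global minimum on the interval, so $g \geq 0$ on $[0,x_0]$, and in particular $g(x_0) \geq 0$. On $[x_0,\lambda_{\max}]$ the function $g$ is concave, hence lies above the chord joining $(x_0, g(x_0))$ and $(\lambda_{\max}, g(\lambda_{\max})) = (\lambda_{\max}, 0)$; since both endpoint values are nonnegative, the chord is nonnegative throughout, so $g \geq 0$ on $[x_0,\lambda_{\max}]$ as well. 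Combining the two pieces gives $g \geq 0$ on all of $[0,\lambda_{\max}]$, as required.

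The main obstacle is the correct placement of the inflection point $x_0 = -\tfrac{1}{2\sigma}$ relative to $\tfrac1n$ and $\lambda_{\max}$: the entire convex/concave splitting collapses unless $x_0$ lies between these two nodes, and this is exactly the content of Lemma \ref{lemma:modifiedT}~(2). Everything else—verifying the interpolation values $g(\tfrac1n) = g'(\tfrac1n) = g(\lambda_{\max}) = 0$ and reading the sign of $g''$ off from $\sigma < 0$—is routine once Lemma \ref{lemma:modifiedT} is in hand.
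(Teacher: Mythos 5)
Your proposal is correct and follows essentially the same route as the paper: both define the difference $q-f$, use Lemma \ref{lemma:modifiedT} to locate the inflection point $-\tfrac{1}{2\sigma}$ between $\tfrac{1}{n}$ and $\lambda_{\max}$, and argue via convexity (with the double zero at $\tfrac{1}{n}$) on the left piece and concavity (above the chord with nonnegative endpoints) on the right piece. The only cosmetic difference is that you treat $[0,-\tfrac{1}{2\sigma}]$ in one step, where the paper splits it at $\tfrac{1}{n}$.
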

\begin{proof}
Let $h(x)=q(x) - f(x)$.
Clearly, $h(1/n)= h(\lambda_{\max})=0$ and $h'(1/n)=0$.
We show that $h(x) \geq 0$ for three different intervals: (i) $[0,\frac{1}{n}]$, (ii) $[\frac{1}{n},-\frac{1}{2\sigma}]$, and (iii) $[-\frac{1}{2\sigma},\lambda_{\max}]$.
(i) Since $h'' = 2\sigma + \frac{1}{x}$, Lemma \ref{lemma:modifiedT} (2) implies that $h$ is convex on $[0,-\frac{1}{2\sigma}]$. Since $h(1/n)=h'(1/n)=0$, $h(x) \geq 0$ on $[0,\frac{1}{n}]$.
(ii) In the similar way, it holds that $h(x) \geq 0$ on $[\frac{1}{n},-\frac{1}{2\sigma}]$.
(iii) Since $h$ is concave on $[-\frac{1}{2\sigma},\lambda_{\max}]$,
by the definition of concavity, it follows that 
$$ h\Big( -\frac{1}{2\sigma} t + \lambda_{\max}(1-t) \Big) \geq h\Big( -\frac{1}{2\sigma} \Big)t + h( \lambda_{\max})(1-t) \geq 0$$
for all $0\leq t\leq 1$.
The last inequality holds from the fact that $h(-\frac{1}{2\sigma}) \geq 0$ and $h( \lambda_{\max}) = 0$. 
Thus, $h(x) \geq 0$ on $[-\frac{1}{2\sigma},\lambda_{\max}]$.
Therefore, by (i), (ii), (iii) it holds that $h(x) \geq 0$ on $[0,\lambda_{\max}]$.
\end{proof}

\subsection{ Radial Projection-\texorpdfstring{$\RP$}{Lg}}

We denote the simplex of positive probability as $\Delta_n$, i.e., 
$$\Delta_n:=\Big\{(\lambda_1,\lambda_2,\ldots,\lambda_n) \Big| \sum_{i=1}^n \lambda_i= 1,~  \lambda_i  \geq 0 \Big\}.$$ 
Also we denote $\mathcal{E}:=(1 / n,\ldots, 1 / n)$. Clearly, $\mathcal{E}\in \Delta_n$.
The {\emph{Shannon entropy}} of $\Lambda \in \Delta_n$ is defined as  
$\mathcal{S}(\Lambda) = -\sum_{i=1}^n \lambda_i \ln{\lambda_i}.$
It is well-known that it holds
$\mathcal{S}(\Lambda) \leq \ln{n}$, with equality if  $\Lambda = \mathcal{E}$.
That is, $\mathcal{E}$ is the only point where the entropy is maximum.


As Fig. \ref{fig:VNE_3D} is shown, 
the simplex of positive probability $\Delta_3$ can be geometrically presented as
a part of plane in $\R^3$. The color stands for the value of Shannon entropy at each point. 
One can see that
as $\Lambda \in \Delta_3$ gets closer to $\mathcal{E}$, the entropy gets bigger and bigger.
Our main observation is that 
if two points on $\Delta_3$ have same (Euclidean) distances from $\mathcal{E}$, then their purity are same. In general it holds for any $n$.

\begin{lemma}\label{lemma:radial1}
Let $\Lambda=(\lambda_1,\lambda_2,\ldots,\lambda_n)\in \Delta_n$. Then the following are true.
\begin{itemize}
\item[(1)]
$||\mathcal{E}-\Lambda||_2^2 = \sum_{i=1}^n \lambda_i^2 -\frac{1}{n}.$

\item[(2)] $\langle \mathcal{E}, \mathcal{E}-\Lambda \rangle =0.$ ($\langle \cdot, \cdot \rangle$ is the usual inner product.)

\end{itemize}
\end{lemma}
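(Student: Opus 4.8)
The plan is to establish both identities by direct coordinate-wise expansion, the only substantive ingredient being the defining constraint $\sum_{i=1}^n \lambda_i = 1$ of membership in $\Delta_n$ together with the fact that $\mathcal{E}$ has all coordinates equal to $1/n$.

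For (1), I would write $\|\mathcal{E}-\Lambda\|_2^2 = \sum_{i=1}^n \left(\frac{1}{n}-\lambda_i\right)^2$ and expand the square into three sums: $\sum_i \frac{1}{n^2}$, $-\frac{2}{n}\sum_i \lambda_i$, and $\sum_i \lambda_i^2$. The first equals $\frac{1}{n}$ since there are $n$ identical terms, and the middle term collapses to $-\frac{2}{n}$ using the constraint $\sum_i \lambda_i = 1$. Combining these gives $\frac{1}{n} - \frac{2}{n} + \sum_i \lambda_i^2 = \sum_i \lambda_i^2 - \frac{1}{n}$, as claimed.

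For (2), I would similarly factor out the constant first coordinate, since every entry of $\mathcal{E}$ equals $\frac{1}{n}$: $\langle \mathcal{E}, \mathcal{E}-\Lambda\rangle = \frac{1}{n}\sum_{i=1}^n \left(\frac{1}{n}-\lambda_i\right)$. The inner sum equals $n\cdot\frac{1}{n} - \sum_i \lambda_i = 1 - 1 = 0$, again by the constraint, so the inner product vanishes.

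There is essentially no analytic obstacle here; the one thing to watch is clean bookkeeping of the simplex constraint in each term. It is worth flagging the geometric content, since it motivates the method that follows: part (2) records that the displacement $\mathcal{E}-\Lambda$ is orthogonal to $\mathcal{E}$, so every point of $\Delta_n$ is viewed from $\mathcal{E}$ along a direction perpendicular to $\mathcal{E}$ itself, while part (1) shows that the purity $\sum_i \lambda_i^2$ is an affine function of the squared radial distance $\|\mathcal{E}-\Lambda\|_2^2$. Together these two facts justify the radial-projection viewpoint: points equidistant from $\mathcal{E}$ carry the same purity.
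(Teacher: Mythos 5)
Your proof is correct and follows essentially the same route as the paper: a direct coordinate expansion of $\|\mathcal{E}-\Lambda\|_2^2$ using the simplex constraint $\sum_i\lambda_i=1$. In fact you are slightly more complete, since the paper only writes out part (1) and leaves part (2) unproved; your one-line verification of (2) fills that in correctly.
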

\begin{proof}
It is easy to check that
\begin{align*}
||\mathcal{E}-\Lambda||_2^2 
=   \sum_{i=1}^n \bigg( \lambda_i-\frac{1}{n} \bigg )^2 
 = \sum_{i=1}^n \lambda_i^2 -\frac{1}{n}. 
\end{align*}
\end{proof}

\begin{figure}[t!]
  \centering
    \includegraphics[width=0.8\textwidth]{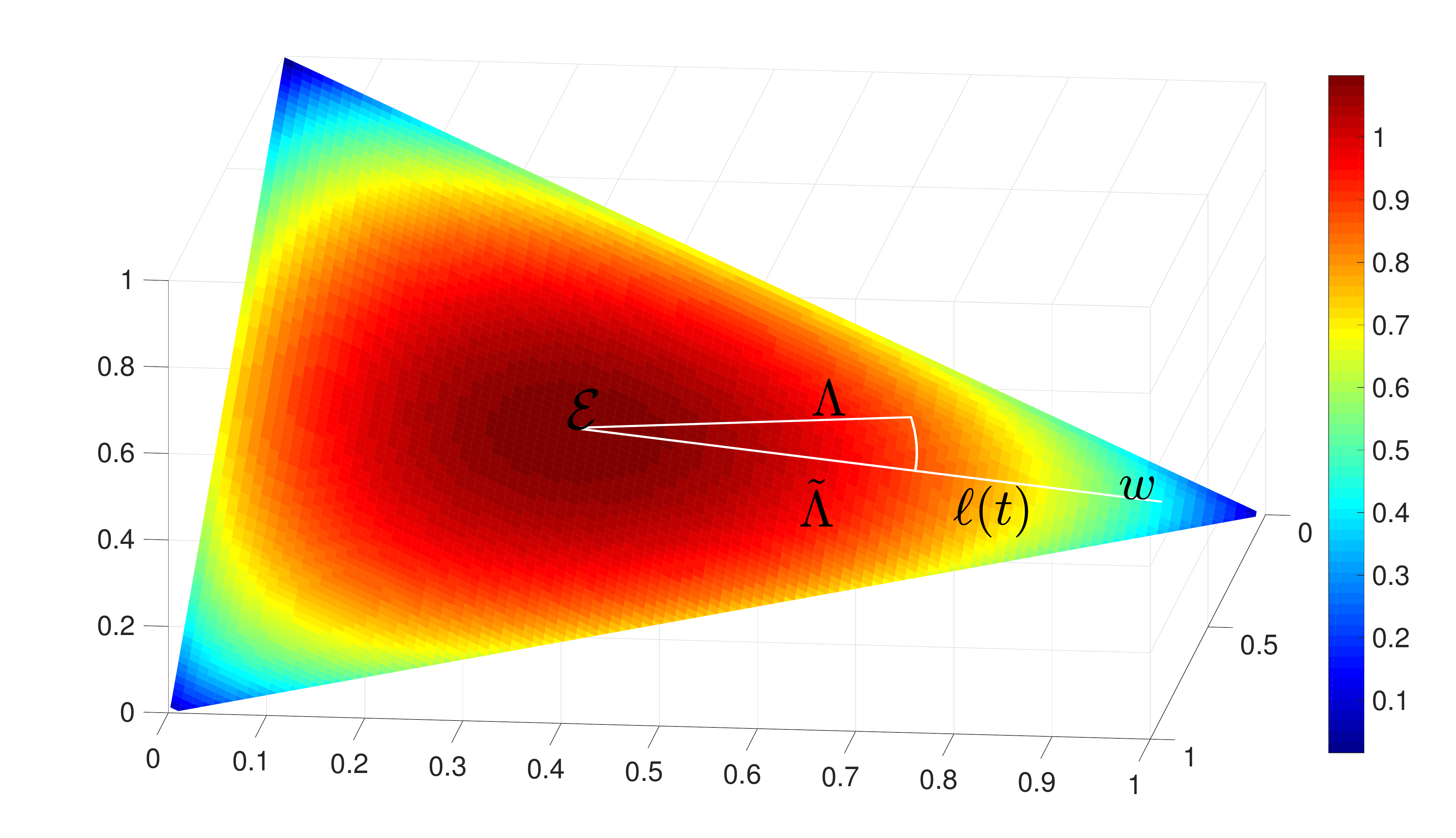}
     \caption{The Shannon entropy at each point $(\lambda_1,\lambda_2,\lambda_3) \in \Delta_3$ with different color levels.}
     \label{fig:VNE_3D}
\end{figure}

\begin{theorem}\label{thm:dis_purity1}
Let  $\bm{\rho}$ and $\tilde{\bm{\rho}}$ be $n\times n$ density matrices. 
Then $|| \bm{\rho} - 1/n \bm{I} ||_F = || \widetilde{\bm{\rho}} - 1/n \bm{I} ||_F$ if and only if their purity are identical.
\end{theorem}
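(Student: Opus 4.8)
The plan is to reduce the Frobenius-norm condition to a statement purely about purity by a single direct trace computation, after which both directions of the equivalence become immediate. The guiding observation is that the matrix $\bm{\rho} - \tfrac{1}{n}\bm{I}_n$ is Hermitian, so its squared Frobenius norm equals the sum of the squares of its eigenvalues, which are exactly $\lambda_i - \tfrac{1}{n}$. This is precisely the quantity $||\mathcal{E} - \Lambda||_2^2$ already evaluated in Lemma \ref{lemma:radial1}~(1), so the matrix-level statement is just the eigenvalue-level statement in disguise.

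First I would expand the squared Frobenius norm using $||\bm{A}||_F^2 = \tr(\bm{A}^2)$ for Hermitian $\bm{A}$, together with linearity of the trace and the defining properties $\tr(\bm{\rho}) = 1$ and $\tr(\bm{I}_n) = n$. Concretely,
\begin{align*}
||\bm{\rho} - \tfrac{1}{n}\bm{I}_n||_F^2
&= \tr\!\Big( \big(\bm{\rho} - \tfrac{1}{n}\bm{I}_n\big)^2 \Big)
= \tr(\bm{\rho}^2) - \tfrac{2}{n}\tr(\bm{\rho}) + \tfrac{1}{n^2}\tr(\bm{I}_n) \\
&= \tr(\bm{\rho}^2) - \tfrac{2}{n} + \tfrac{1}{n}
= \tr(\bm{\rho}^2) - \tfrac{1}{n}.
\end{align*}
The same identity holds verbatim for $\widetilde{\bm{\rho}}$, giving $||\widetilde{\bm{\rho}} - \tfrac{1}{n}\bm{I}_n||_F^2 = \tr(\widetilde{\bm{\rho}}^2) - \tfrac{1}{n}$. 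Note that the additive constant $\tfrac{1}{n}$ is identical for both matrices precisely because they have the same size $n$, which is what makes the reduction clean.

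With both norms expressed as purity minus the common constant $\tfrac{1}{n}$, the equivalence is forced. If the purities agree, $\tr(\bm{\rho}^2) = \tr(\widetilde{\bm{\rho}}^2)$, then subtracting $\tfrac{1}{n}$ and taking square roots gives equal Frobenius distances to $\tfrac{1}{n}\bm{I}_n$. Conversely, equality of the Frobenius distances gives $\tr(\bm{\rho}^2) - \tfrac{1}{n} = \tr(\widetilde{\bm{\rho}}^2) - \tfrac{1}{n}$, and cancelling $\tfrac{1}{n}$ yields equal purities.

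There is no genuine obstacle here: the only point requiring care is the simplification of the cross term, where one must invoke the unit-trace property of density matrices rather than treating $\bm{\rho}$ as an arbitrary Hermitian matrix. I would flag that this computation is the matrix analogue of Lemma \ref{lemma:radial1}~(1), and indeed one could alternatively finish by diagonalizing $\bm{\rho}$ and quoting that lemma directly with $\Lambda = (\lambda_1,\ldots,\lambda_n)$; the trace-based argument above simply avoids passing through the spectrum.
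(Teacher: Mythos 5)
Your proof is correct and follows essentially the same route as the paper: both reduce the claim to the identity $\|\bm{\rho} - \tfrac{1}{n}\bm{I}\|_F^2 = \tr(\bm{\rho}^2) - \tfrac{1}{n}$, the paper by diagonalizing and invoking Lemma \ref{lemma:radial1}~(1) at the eigenvalue level, you by expanding the trace directly at the matrix level. The trace computation is a clean, slightly more self-contained variant of the same argument, and your remark that it is the matrix analogue of Lemma \ref{lemma:radial1}~(1) accurately identifies the connection.
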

\begin{proof}
Note that $|| \bm{\rho} - 1/n \bm{I} ||_F = || \widetilde{\bm{\rho}} - 1/n \bm{I} ||_F$ if and only if $||\mathcal{E}-\Lambda ||_2 =  ||\mathcal{E}-\widetilde{\Lambda} ||_2$, 
where $\lambda_1,\lambda_2,\ldots,\lambda_n$ and $\tilde{\lambda}_1,\tilde{\lambda}_2,\ldots,\tilde{\lambda}_n$ are eigenvalues of $\bm{\rho}$ and $\widetilde{\bm{\rho}}$, respectively. It is true from Lemma \ref{lemma:radial1}. 
\end{proof}

Lemma \ref{lemma:radial1} states that two points on $\Delta_n$ have the same distance from $\mathcal{E}$ if and only if the distances from the origin are identical.
Then
we can find $\tilde{\Lambda} \in \Delta_n$ whose entropy can be computed much easily such that
\begin{equation*}
||\mathcal{E}-\Lambda||_2 =||\mathcal{E}-\widetilde{\Lambda} ||_2.
\end{equation*}
There are infinitely many directions from $\mathcal{E}$ to find $\tilde{\Lambda} \in \Delta_n$. Among them we
pick  $\bm{w}=(c,\frac{1-c}{n-1},\ldots, \frac{1-c}{n-1}) \in \Delta_n$ with $c=\lambda_{\max}$. 
 
We consider the line segment $\ell(t)= (\bm{w}-\mathcal{E})t+\mathcal{E}$, $0 \leq t \leq 1$, i.e., 
\begin{equation*}
\ell(t)= \bigg(\frac{1-t}{n}+tc,\frac{1-t}{n}+\frac{t(1- c)}{n-1},\ldots, 
\frac{1-t}{n}+\frac{t(1- c)}{n-1} \bigg).
\end{equation*}
Since $\Delta_n$ is convex, $\ell(t) \in \Delta_n$ for all $0 \leq t \leq 1$. 
For each $0\leq t \leq 1$, we have 
\begin{align*}
S(\ell(t))
&= -\bigg(\frac{1-t}{n}+tc\bigg) \ln{\bigg( \frac{1-t}{n}+tc \bigg)}\\
&- (n-1) \bigg(\frac{1-t}{n}+\frac{t(1- c)}{n-1} \bigg)\ln{\bigg(\frac{1-t}{n}+\frac{t(1- c)}{n-1} \bigg)}.
\end{align*}
Lemma \ref{lemma:radial1} implies that 
\begin{align*}
||\mathcal{E}-\ell(t)||_2^2 
= t^2 ||\bm{w} - \mathcal{E} ||_2^2 
=   \frac{(cn-1)^2}{n(n-1)}  t^2.
\end{align*}

We solve the following equation for $0\leq t \leq 1$:
\begin{equation*}
||\mathcal{E}-\Lambda||_2 =||\mathcal{E}-\ell(t)||_2.
\end{equation*}
Then the solution, say $t_0$, is
\begin{equation*}
t_0=   \frac{\sqrt{n(n-1)}}{cn-1}  \sqrt{  \sum_{i=1}^n   \lambda_i^2 -\frac{1}{n} }.
\end{equation*}

Since $||\mathcal{E}-\Lambda||_2 =||\mathcal{E}-\ell(t_0)||_2$, we have that
\begin{align*}
&S(\Lambda) \approx S(\ell(t_0)) = -\bigg(\frac{1-t_0}{n}+t_0c\bigg) \ln{\bigg( \frac{1-t_0}{n}+t_0c \bigg)}\\
&- (n-1) \bigg(\frac{1-t_0}{n}+\frac{t_0(1- c)}{n-1} \bigg)\ln{\bigg(\frac{1-t_0}{n}+\frac{t_0(1- c)}{n-1} \bigg)}.
\end{align*}
In fact, putting $t_0$ into the right side the constant $c$ can be cancelled.
Thus, this approximation does not need the maximum eigenvalue.

Now we propose the quadratic approximation for the von Neumann graph entropy, called {Radial Projection}, denoted by $R$, is defined as

\begin{align}
R(G)
&= -\bigg( \sqrt{\frac{n-1}{n}}\kappa_G+\frac{1}{n} \bigg) \ln{\bigg( \sqrt{\frac{n-1}{n}}\kappa_G+\frac{1}{n} \bigg)}  \nonumber \\
&\quad - (n-1) \bigg(  -\frac{1}{\sqrt{(n-1)n}}\kappa_G+\frac{1}{n} \bigg)\ln{\bigg( -\frac{1}{\sqrt{(n-1)n}}\kappa_G+\frac{1}{n} \bigg)}, \label{eq:R_form}
\end{align}
where $\kappa_G = [\tr(\bm{\rho}_G^2) - \frac{1}{n}  ]^{1/2}$.

\subsection{Weighted mean}
 Denote the weighted mean of $a,b \in \R$ as $a\#_t b$. That is, 
$a\#_t b = ta +(1-t)b. $
In a similar way we consider $\FINGER(\bm{\rho}) \#_t \MT(\bm{\rho})$ and $\FINGER(\bm{\rho})  \#_t \RP(\bm{\rho})$.
By Theorem \ref{thm:FINGER_eigenmax} (2), it is shown that FINGER-$\FINGER$ is always smaller than the exact von Neumann entropy.
On the other hand, by Theorem \ref{thm:Modified_Taylor_error}
Modified Taylor-$\MT$ is always greater than the exact von Neumann entropy.
Even though it is not proved mathematically, Fig. \ref{fig:VNE_3D}, \ref{exact_approx1}, \ref{exact_approx2} show that 
Radial projection-$\RP$ are greater than the exact von Neumann entropy.
The weighted mean of them can be computed to improve the approximations.

We solve a optimization problem to find optimal $t^*$. 
For example, consider $\FINGER \#_t \MT$. 
Given large quantity of real data sets, the approximation of von Neumann entropy using $\FINGER$ and $\MT$ were calculated as the input values $ x^{(i)}_{\widehat{T}}$ and $x^{(i)}_{\widehat{H}}$ while the actual von Neumann entropy value were also calculated as output values $y^{(i)}$ for $i=1,\ldots, N$ ($N$ is the number of data sets). 
Then the optimization problem is given as follows:
$$t^* =   \mathop{\arg\min}_{0\leq t\leq 1} J(t), $$
where the cost function is
$$J(t) = \frac{1}{N} \sum^{N}_{i = 1} \Big(t x^{(i)}_{\widehat{H}} + (1- t) x^{(i)}_{\widehat{T}}-y^{(i)} \Big)^2.$$
We use the gradient descent method to find optimal $t^*$. 
Initially, $t_0=\frac{1}{2}$ is given.
In each step of gradient descent, $t_j$ is updated with the function:
$$t_j = t_{j-1}- \alpha  J'(t_{j-1}),$$
where $j$ denotes iteration times, and $\alpha$ is the step size, which is set to be $10^{-6}$.

One may wander if such \textquotedblleft real data sets" are independent of the test networks employed in the numerical experiments. On the one hand they should, in order to avoid a circular argument; on the other hand, to be effective such \textquotedblleft training" should somehow extract the generalities of a class of graphs. 
In order to eliminate overfitting, we separated our datasets into two parts, the training datasets and the test datasets. After we got the optimal weights $t^*$ using the process stated above with training datasets, we calculate the approximate value $V_1$ of the test datasets. Then we use test datasets to find the optimal $t^*_{test}$ and calculate the approximate value $V_2$ using $t^*_{test}$, the average difference between $V_1$ and $V_2$ are lower than $0.02$, which is acceptable. 
However, it is still questionable to show mathematically that  $t^*$ is optimal.

Using the optimal values $t^*$ solved by the gradient descent method, we call $\FINGER\#_{0.3824} \MT$ and $\FINGER\#_{0.2794} \RP$ as 
\emph{Improved Modified Taylor} and
\emph{Improved Radial Projection}, respectively. 
As anonymous referee suggested, we also consider 
$\omega_1\widehat{\mathbf{H}} + \omega_2 \mathbf{T} + \omega_3  \widehat{\mathbf{T}}  + \omega_ 4 \mathbf{R} + \beta$, where $\omega_i$ are weights such that
$0\leq \omega_i \leq 1$ and $\omega_1+\cdots + \omega_4 =1$, and 
$\beta \in \mathbb{R}$ is a constant shift. In a similar way, the optimal weights can be found.
It is shown that $0.2299\widehat{\mathbf{H}}+0.3099\widehat{\mathbf{T}}+0.4602 \mathbf{R}-0.0073$ has superior performances for real datasets.
We call this as \emph{Mixed Quadratic approximiation}.

\section{Experiments}
\label{sec:experiments}

\subsection{Random graphs}
\begin{figure}[t!]
\begin{center}
\begin{minipage}[b]{.45\linewidth}
  \centering
  \centerline{\includegraphics[width=1\linewidth]{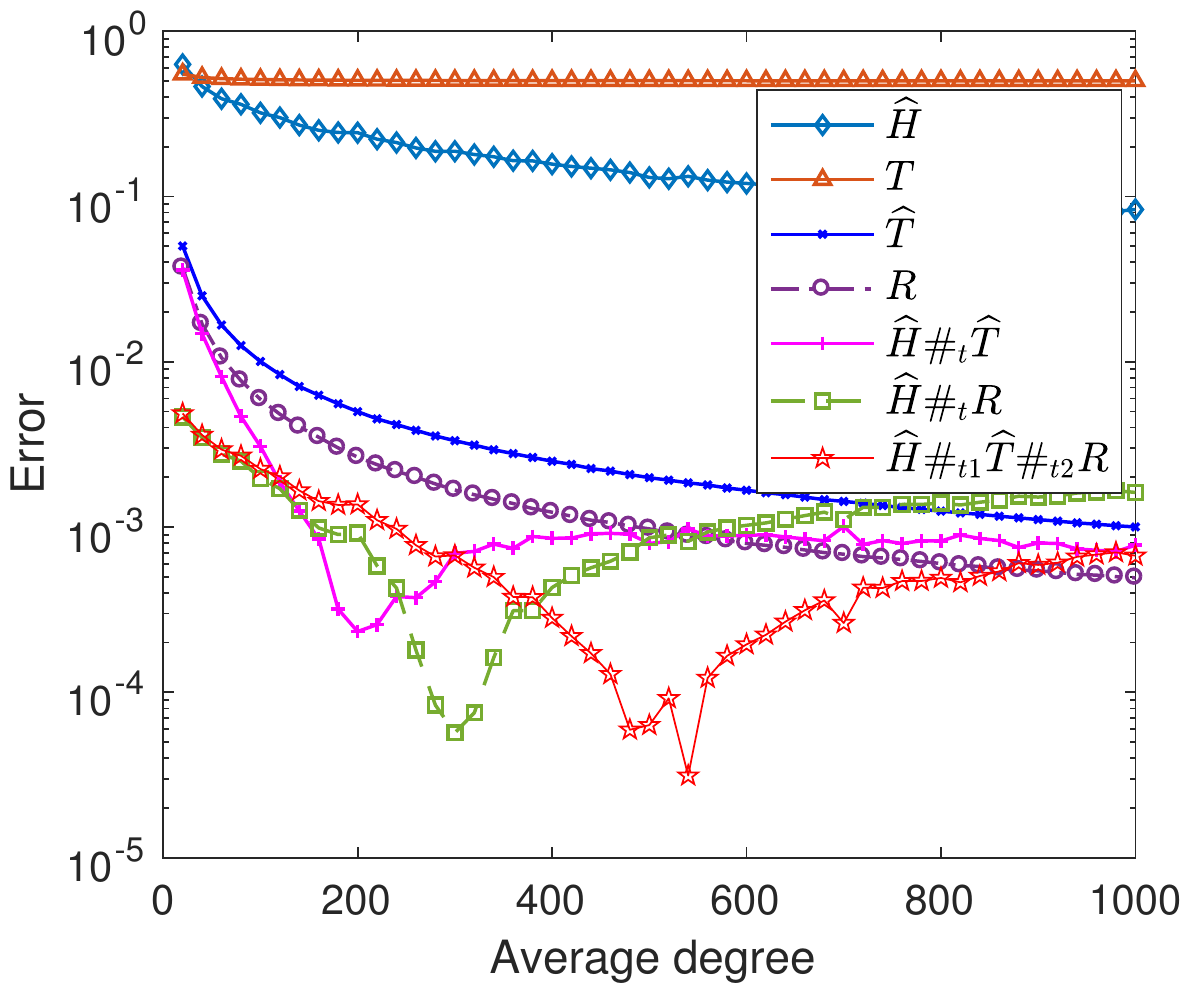}}
  \centerline{(a) ER average degree}\medskip
\end{minipage}
\begin{minipage}[b]{.45\linewidth}
  \centering
  \centerline{\includegraphics[width=1\linewidth]{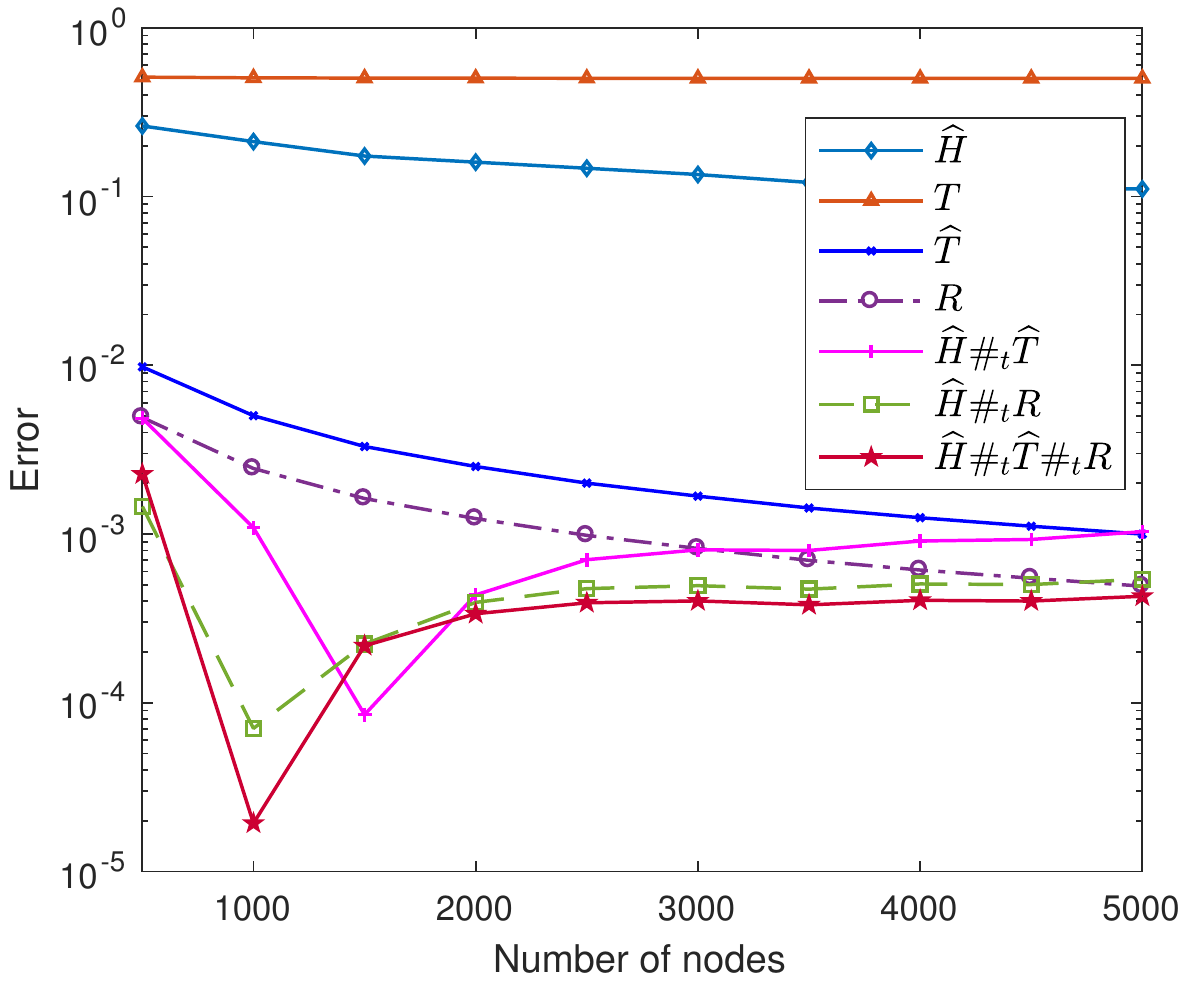}}
  \centerline{(b) ER nodes}\medskip
\end{minipage}

\begin{minipage}[b]{.45\linewidth}
  \centering
  \centerline{\includegraphics[width=1\linewidth]{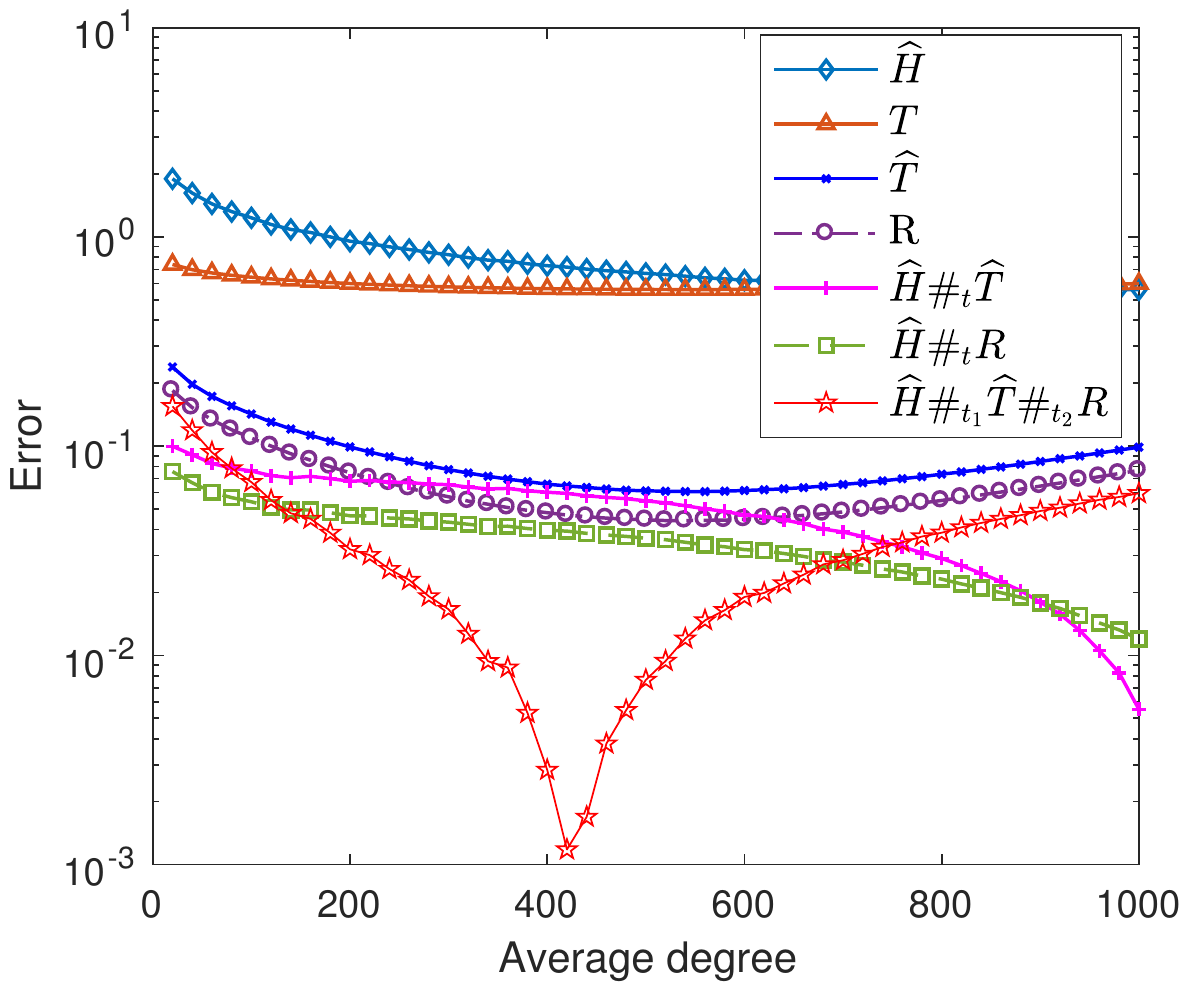}}
  \centerline{(c) BA average degree}\medskip
\end{minipage}
\begin{minipage}[b]{.45\linewidth}
  \centering
  \centerline{\includegraphics[width=1\linewidth]{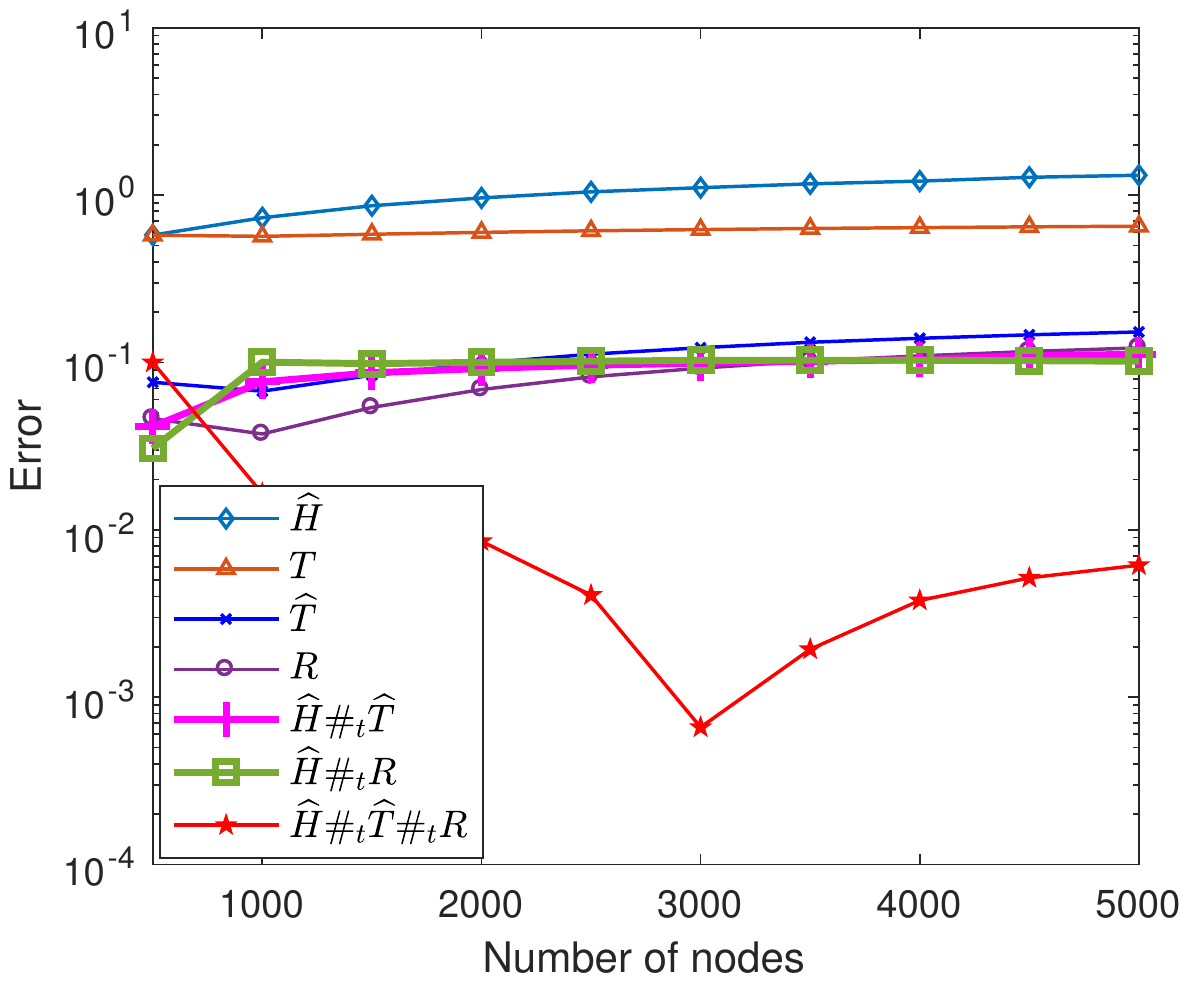}}
  \centerline{(d) BA nodes}\medskip
\end{minipage}

\begin{minipage}[b]{.45\linewidth}
  \centering
  \centerline{\includegraphics[width=1\linewidth]{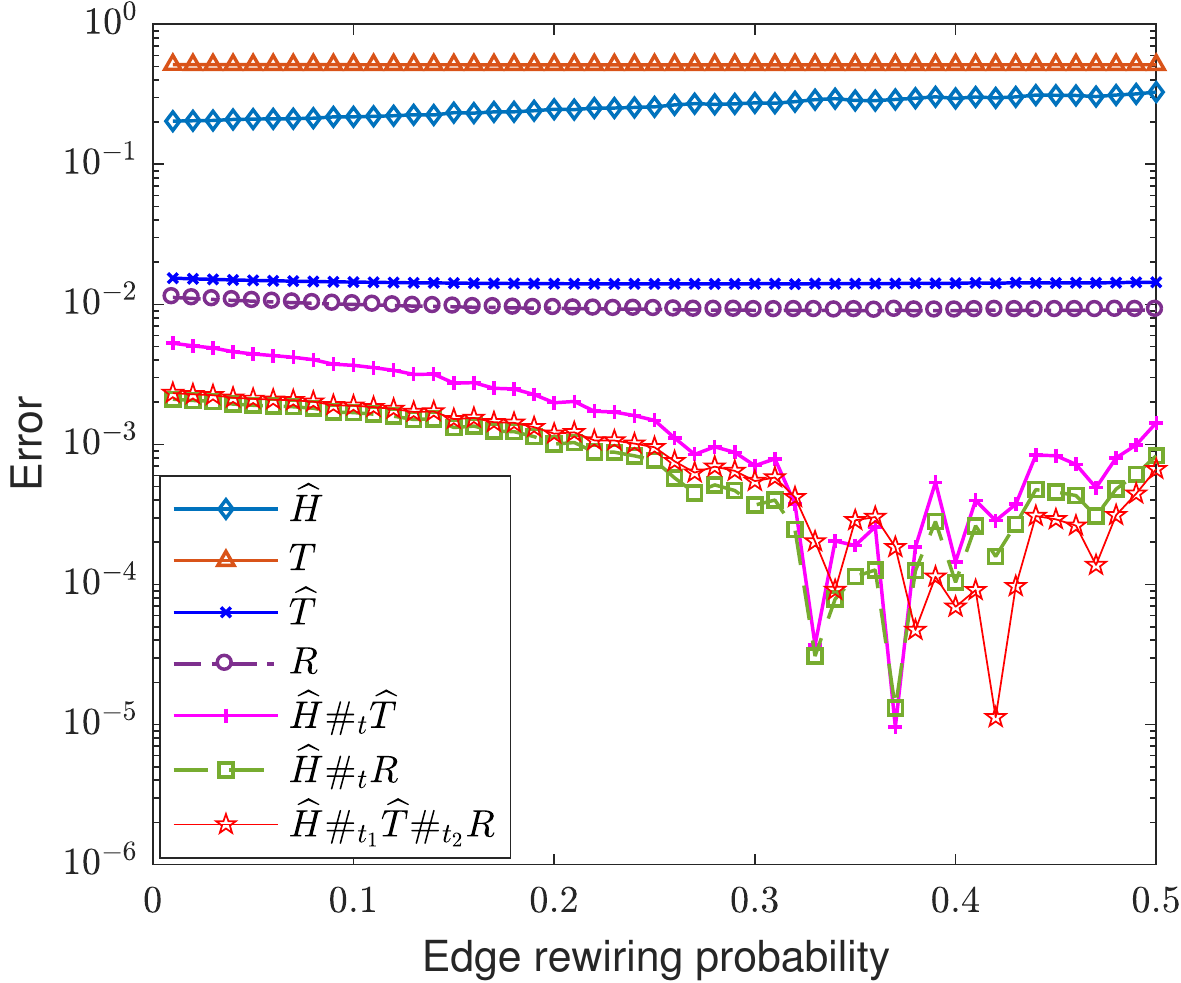}}
  \centerline{(e) WS average degree}\medskip
\end{minipage}
\begin{minipage}[b]{.45\linewidth}
  \centering
  \centerline{\includegraphics[width=1\linewidth]{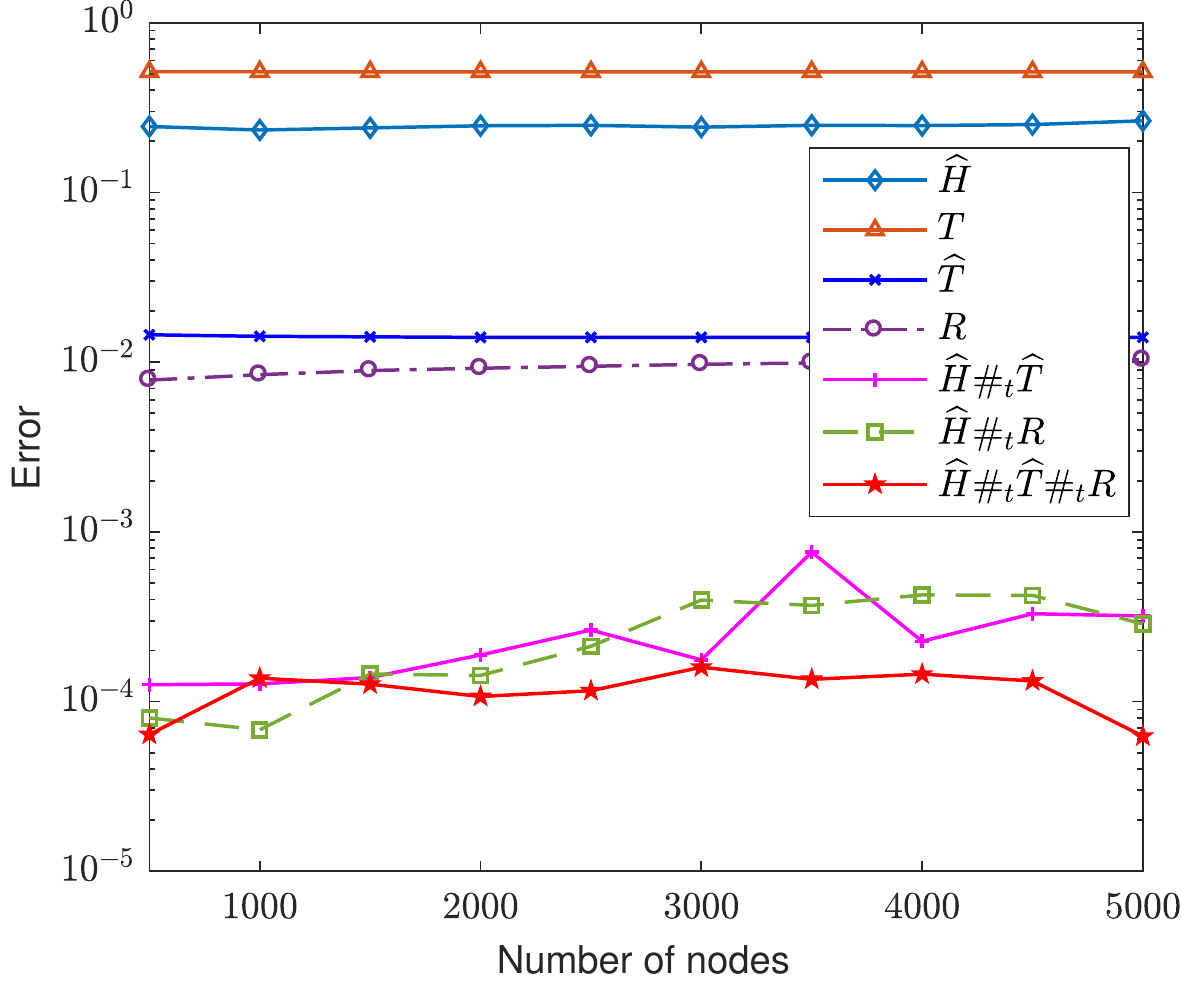}}
  \centerline{(f) WS nodes}\medskip
\end{minipage}
\end{center}
\caption[left]{Error of the approximations for three random models: (i) Erd\H{o}s-R\'{e}nyi (ER) model; (ii) Barab\'{a}si-Albert (BA) model; (iii) Watts-Strogatz (WS) model.}
\label{fig:res1}
\end{figure}

In this section results from various experiments with data sets are provided. All experiments were conducted by MATLAB R2016 on a 16-core machine with 128GB RAM.

Three random graph models are considered: (i) the Erd\H{o}s-R\'{e}nyi (ER) model \cite{ER59,gilbert1959} - the ER model represents two closely related models that were introduced independently and simultaneously. Here we use the ER model which was proposed by Gilbert \cite{gilbert1959}. $G(n,p)$ is denoted as a model with $n$ nodes, and each pair of nodes were linked independently with probability $0 \leq p \leq 1$;
(ii) the Barab\'{a}si-Albert (BA) model \cite{RevModPhys.74.47} - the BA model is a special case of the Price's model. It can generate scale-free graphs in which the degree distribution of the graph follows the power law distribution; 
and (iii) the Watts-Strogatz (WS) model \cite{watts1998collective}  - the WS model generates graphs with small world properties, given a network with N nodes and the mean degree K, initially nodes are linked as a regular ring where each node is connected to $K/2$ nodes in each side, then rewire the edges with probability $0 \leq p \leq 1$.
The approximation error is defined as $| Exact - Approximation |.$ 
The results are averaged over $50$ random trials. 

The simulations demonstrate that Improved Modified Taylor and Improved Radial Projection have best performances.
However, they are required to compute the maximum eigenvalue.
Thus, the computational cost is slightly higher than Radial Projection which does not need to compute the maximum eigenvalue.
Thus considering the time cost, Radial Projection is the superior method.

Theorem \ref{thm:FINGER_eigenmax} (2) states that FINGER-$\widehat{\mathbf{H}}$ is always smaller than the exact von Neumann entropy. On the other hand, by Theorem 3 Modified Taylor-$\widehat{\mathbf{T}}$ is always greater than than the exact von Neumann entropy. 
Additionally, Theorem 3 shows that FINGER $\widehat{\mathbf{H}}(\rho)$ is bounded above by $(1-\lambda_{\max}) \mathbf{H}(\rho)$.
That is, the FINGER is always under-estimated whose error is bigger than $\lambda_{\max}  \mathbf{H}(\rho)$.
In a similar way, $\widehat{\mathbf{T}}$ is always over-estimated. Although we could not find the minimum error mathematically, the simulation results for lots of dataset show that the minimum error is strictly bigger than $0$.

\begin{figure}[t!]
  \centering
    \includegraphics[width=1\textwidth]{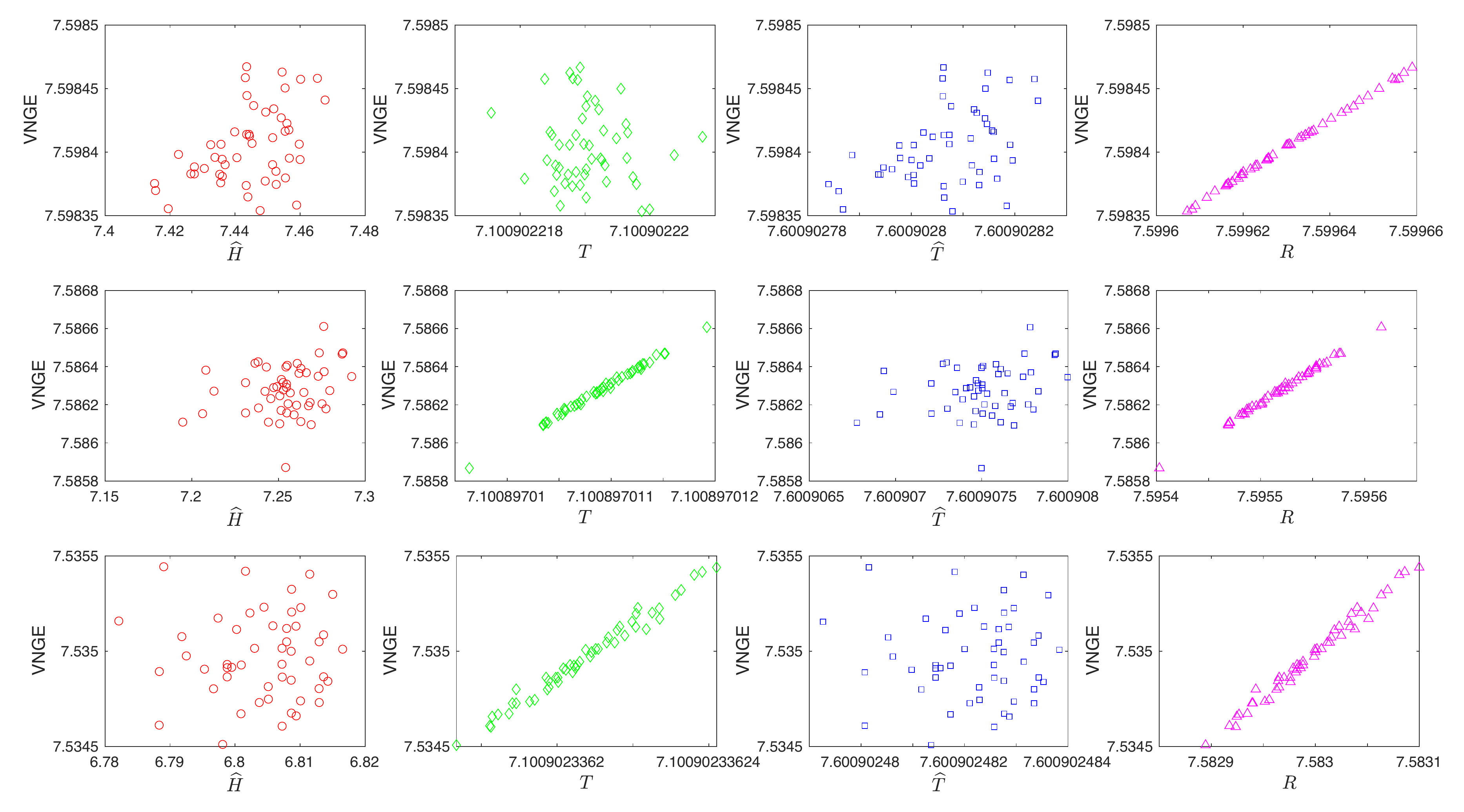}
     \caption{Entropies correlations on the ER model (top), BA model (middle) and WS model (bottom). } \label{fig:correlation}
\end{figure}

In order to further analyze the performance of our proposed algorithms, we perform a correlation study between the exact von Neumann entropy and corresponding approximate value returned by four approximation algorithms on three different random graph model. More specifically, for each model, we generate $50$ graphs. Fig. \ref{fig:correlation} shows the results of the correlation analysis. It is shown that there exists a strong correlation between the exact VNGE and its approximation obtained by Radial Projection on all random graphs. We then observe that correlation between exact VNGE and its approximation got by Taylor is strong on BA and WS model, but becomes much weaker on ER model.
The reader is referred to \cite{2018arXiv180907533M} for a recent analysis of FINGER approximation for graph Laplacians.

\subsection{Real-world datasets}
The real-world datasets in various fields are considered \cite{nr,OPSAHL2009155,OPSAHL2013159}.
We use 137 different number of unweighted networks and 48 different number of weighted networks in different fields for simulations.
The detailed information about datasets on Fig.~\ref{exact_approx1} and Fig.~\ref{exact_approx2} can be found at 
\url{https://github.com/Hang14/RDJ}.
Fig.~\ref{exact_approx1} and Fig.~\ref{exact_approx2} show the scatter points of the von Neumann entropy (y-axis) versus the quadratic approximations (x-axis) for both the unweighted and weighted real-world datasets.
It demonstrates that Modified Taylor and Radial Projection have better performances than FINGER. 
Mixed Quadratic approximation shows the best performance.

\subsection{Time comparison}
Recall that computing von Neumann graph entropy requires $\mathcal{O}(n^3)$ computational complexity. In order to accelerate its computation,
we use the quadratic approximations for the function $f(x)=-x \ln x$. 
Then each approximation can be computed by the purity of the density matrix for a given graph.
Lemma \ref{lemma:quadratic_trace1} shows that computing the purity requires $\mathcal{O}(n+m)$ computational complexity, where $|V| = n$ and $|E| = m$. However, FINGER and Modified Taylor additionally need to compute the maximum eigenvalue whose time complexity is $\mathcal{O}(n^2)$ \cite{DBLP:books/daglib/0086372}. On the other hand, Radial projection shows the best performance with no maximum eigenvalue.
When the original graph is a complete graph then $|E| = n^2$ which is also the upper bound of $|E|$. However, in real life scenario, complete graph is rare, instead, sparse graph are more commonly seen, therefore, the time complexity can remain in a rather linear form.

\begin{figure}[t!]
  \centering
    \begin{subfigure}[b]{0.95\textwidth}
      \centering
        \includegraphics[width=1\linewidth]{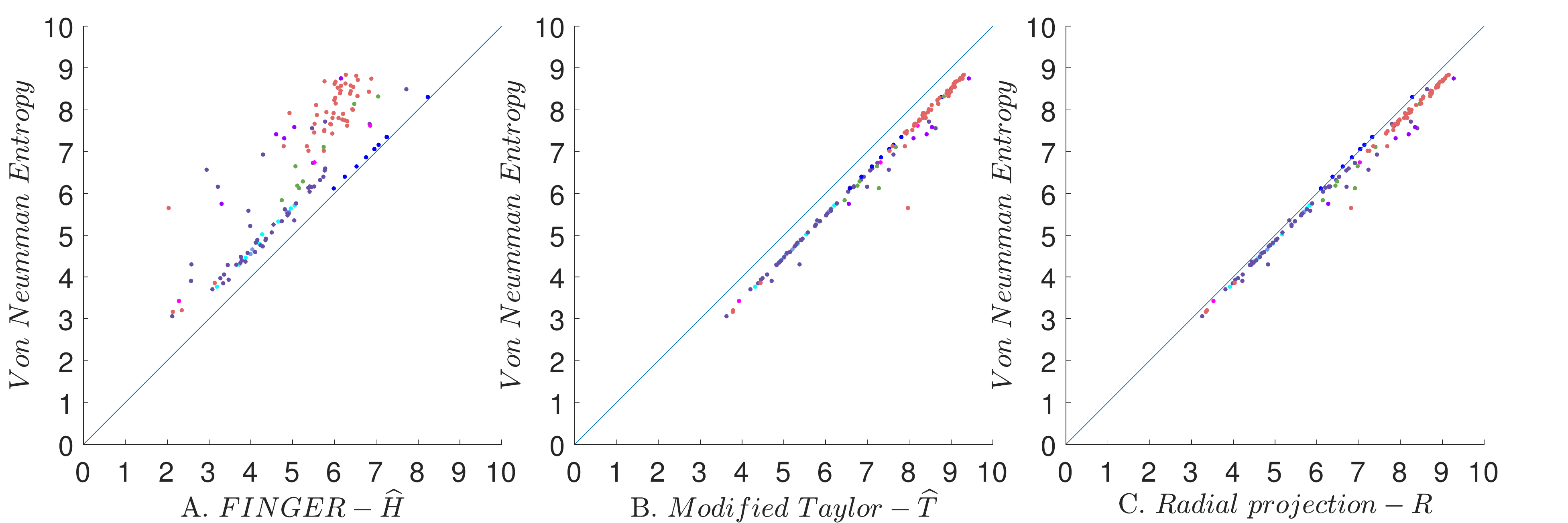}
              \end{subfigure}              
    \begin{subfigure}[b]{0.95\textwidth}
      \centering
        \includegraphics[width=1\linewidth]{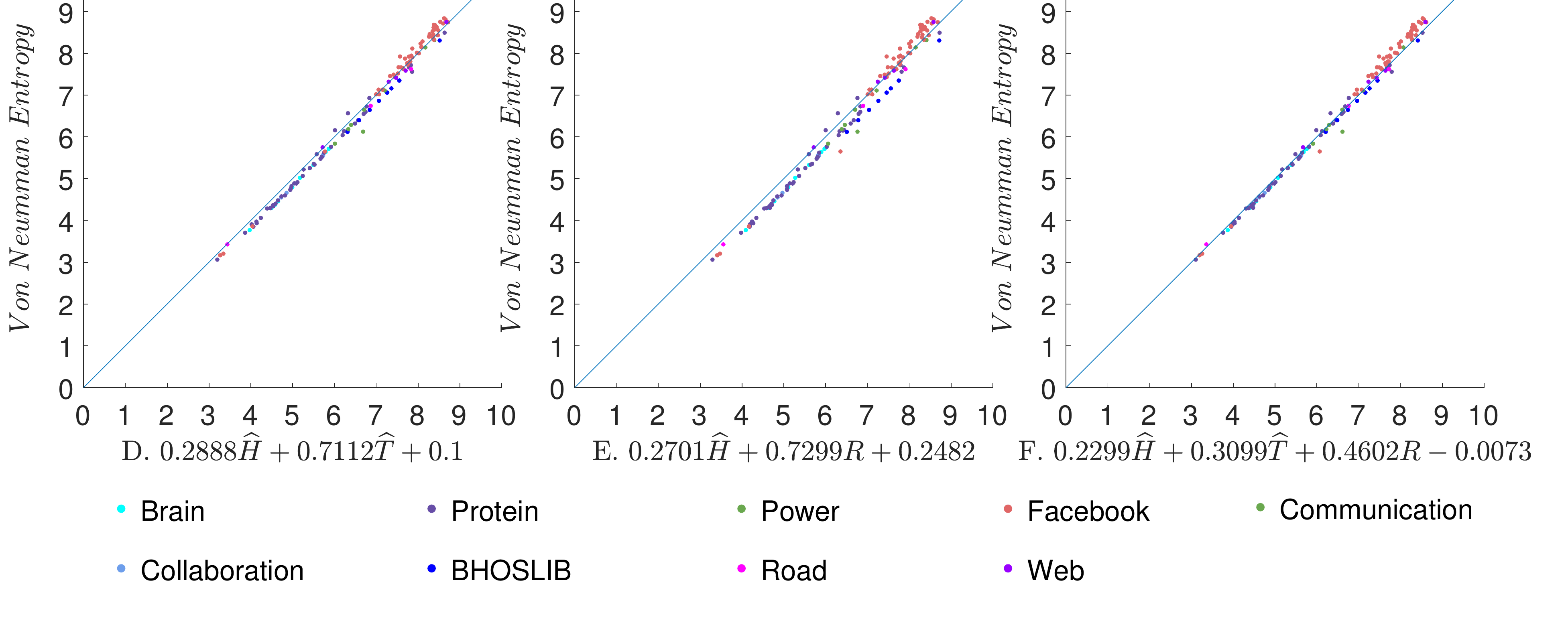}
    \end{subfigure}
    \caption{Exact entropy versus approximate entropy for the unweighted real datasets.}
      \label{exact_approx1}
\end{figure}

\begin{figure}[htp!]
  \centering
    \begin{subfigure}[b]{0.95\textwidth}
      \centering
        \includegraphics[width=1\linewidth]{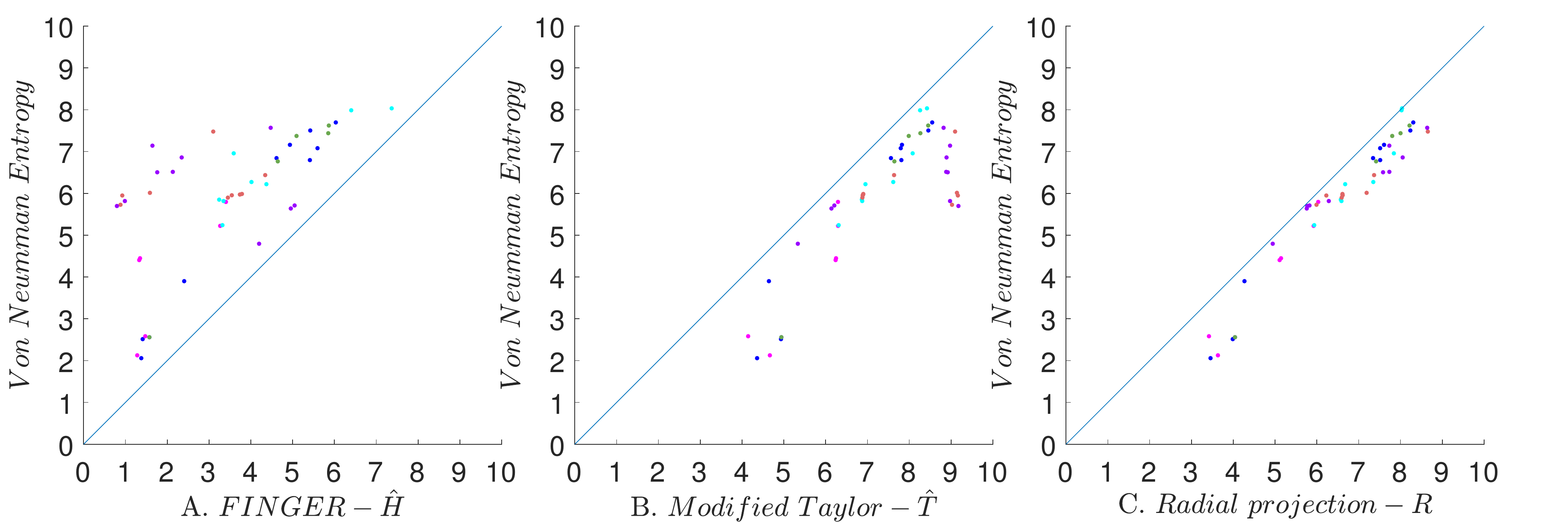}
              \end{subfigure}              
    \begin{subfigure}[b]{0.95\textwidth}
      \centering
        \includegraphics[width=1\linewidth]{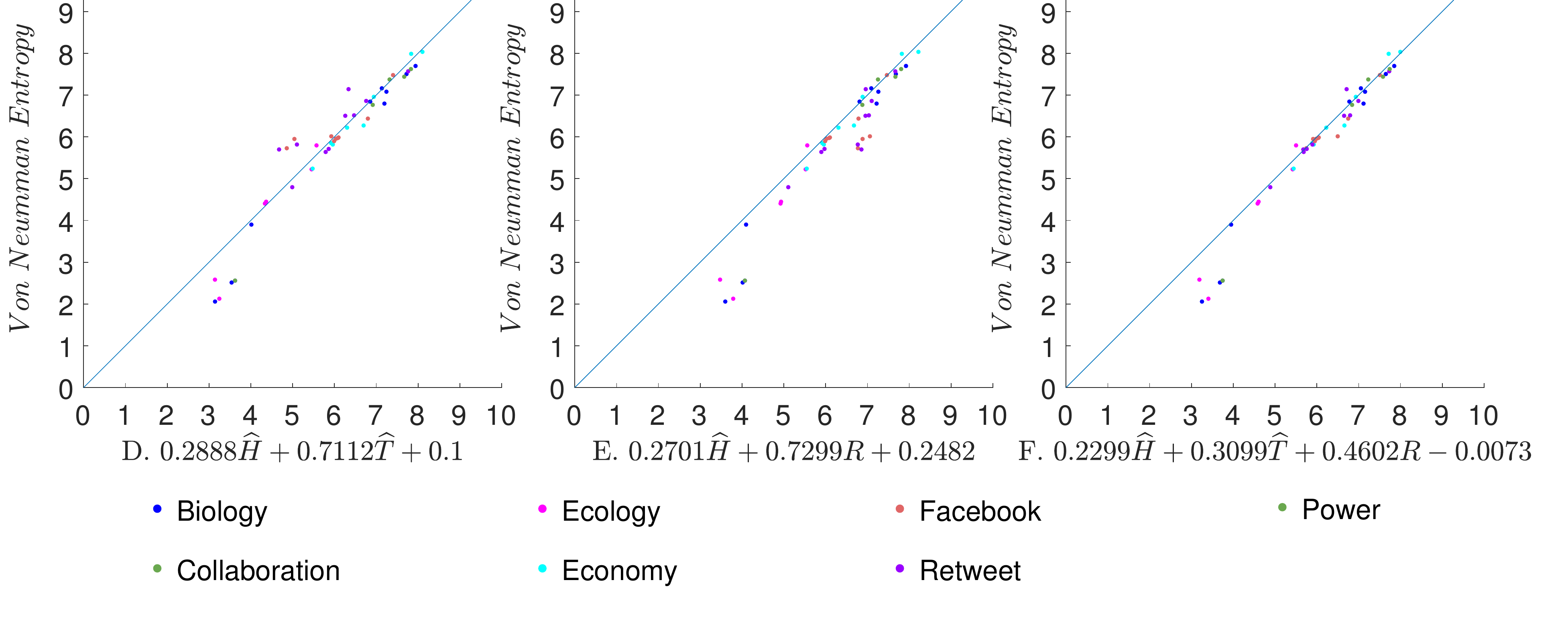}
    \end{subfigure}
    \caption{Exact entropy versus approximate entropy for the weighted real datasets.}
      \label{exact_approx2}
\end{figure}

\section{Applications}\label{sec:applicatoins}
\label{sec:applications}
One major application of von Neumann graph entropy is the computation of Jensen-Shannon distance(JSdist) between any two graphs from a graph sequence \cite{articleehnail}. Given a graph sequence $\mathcal{G}$, the Jensen-Shannon distance of any two graphs $G = (V,E,\bm{W}) \in \mathcal{G}$ and $G' = (V,E',\bm{W}') \in \mathcal{G}$ is defined as
$$\text{JSdist}(G,G') = \sqrt{\NE(\overline{G})-\frac12 [\NE(G)+\NE(G')]},$$ 
where $\overline{G} = (V,\overline{E},\overline{\bm{W}}) = \frac{G \oplus G'}{2}$ is the averaged graph of $G$ and $G'$ such that $\overline{\bm{W}} = \frac{\bm{W}+\bm{W}'}{2}$.
The Jensen-Shannon distance has been proved to be a valid distance metric in \cite{DBLP:journals/tit/EndresS03,PhysRevA.79.052311}.

The Jensen-Shannon distance have been applied into many fields including network analysis \cite{articleehnail} and machine learning \cite{7424294}. Especially, it can be used in anomaly detection and bifurcation detection \cite{8461400}.
\cite{2018arXiv180511769C} demonstrated the validation of using FINGER for computing VNGE. Comparing to the state-of-art graph similarity methods, FINGER yields superior and robust performance for anomaly detection in evolving Wikipedia networks and router communication networks, as well as bifurcation analysis in dynamic genomic networks.
Note that the simulations show that our proposed methods show better performance than FINGER.




\section{Final remarks}

We proposed quadratic approximations for efficiently estimating the von Neumann entropy of large-scale graphs. It reduces the computation of VNGE from cubic complexity to linear complexity for a given graph. 
We finally close with some open problems arisen during our study. 

\begin{itemize}
\item[(1)]
In Theorem \ref{thm:Modified_Taylor_error},  the modified Taylor-$\MT$ is always bigger than VNGE. However, It is still open if there exists any error bound for $\MT$ which is same as or similar to Theorem \ref{thm:FINGER_eigenmax} (3).

\item[(2)]
The simulations show that Radial projection-$\RP$ have superior performances without any information about the maximum eigenvalue. However, it is questionable if there exists some error bound between von Neumann entropy and Radial projection.

\item[(3)] 
Fig. \ref{fig:res1} shows that the error for weighted means have interesting behaviors. Further analysis is needed with different class of graphs.
It remains for future work.

\end{itemize}

\section*{Acknowledgement}
This work of H. Choi and Y. Shi was partially supported  by Shanghai Sailing Program under Grant 16YF1407700, and National Nature Science Foundation of China (NSFC) under Grant No. 61601290.


\section*{References}
\bibliographystyle{plain}
\bibliography{refs.bib}

\begin{thebibliography}{10}

\bibitem{ABDOLLAHI2015401}
A.~Abdollahi, S.~Janbaz, and M.~R. Oboudi.
\newblock Distance between spectra of graphs.
\newblock {\em Linear Algebra Appl.}, 466:401 -- 408, 2015.

\bibitem{RevModPhys.74.47}
R.~Albert and A.-L. Barab\'asi.
\newblock Statistical mechanics of complex networks.
\newblock {\em Rev. Mod. Phys.}, 74:47--97, Jan 2002.

\bibitem{PhysRevE.80.045102}
K.~Anand and G.~Bianconi.
\newblock Entropy measures for networks: Toward an information theory of
  complex topologies.
\newblock {\em Phys. Rev. E}, 80:045102, October 2009.

\bibitem{PhysRevE.83.036109}
K.~Anand, G.~Bianconi, and S.~Severini.
\newblock Shannon and von {N}eumann entropy of random networks with
  heterogeneous expected degree.
\newblock {\em Phys. Rev. E}, 83:036109, Mar 2011.

\bibitem{6460767}
L.~Bai, E.~R. Hancock, L.~Han, and P.~Ren.
\newblock Graph clustering using graph entropy complexity traces.
\newblock In {\em Proceedings of the 21st International Conference on Pattern
  Recognition (ICPR2012)}, pages 2881--2884, Nov. 2012.

\bibitem{6460766}
L.~Bai, E.~R. Hancock, and P.~Ren.
\newblock Jensen-shannon graph kernel using information functionals.
\newblock In {\em Proceedings of the 21st International Conference on Pattern
  Recognition (ICPR2012)}, pages 2877--2880, Nov. 2012.

\bibitem{8445612}
D.~Berberidis and G.~B. Giannakis.
\newblock Data-adaptive active sampling for efficient graph-cognizant
  classification.
\newblock {\em IEEE Trans. Signal Process.}, 66(19):5167--5179, Oct 2018.

\bibitem{PhysRevA.79.052311}
J.~Bri\"et and P.~Harremo\"es.
\newblock Properties of classical and quantum jensen-shannon divergence.
\newblock {\em Phys. Rev. A}, 79:052311, May 2009.

\bibitem{2018arXiv180511769C}
P.~Y. {Chen}, L.~{Wu}, S.~{Liu}, and I.~{Rajapakse}.
\newblock Fast incremental von {N}eumann graph entropy computation: Theory,
  algorithm, and applications.
\newblock {\em ArXiv e-prints}, May 2018.

\bibitem{2018arXiv180700252C}
H.~{Choi}, H.~{Lee}, Y.~{Shen}, and Y.~{Shi}.
\newblock {Comparing large-scale graphs based on quantum probability theory}.
\newblock {\em ArXiv e-prints}, June 2018.

\bibitem{DAS2017305}
K.~Ch. Das and S.~Sun.
\newblock Distance between the normalized {L}aplacian spectra of two graphs.
\newblock {\em Linear Algebra Appl.}, 530:305 -- 321, 2017.

\bibitem{articleehnail}
M.~D. Domenico, V.~Nicosia, A.~Arenas, and V.~Latora.
\newblock Structural reducibility of multilayer networks.
\newblock {\em Nature Communications}, 6(1), apr 2015.

\bibitem{DBLP:journals/tit/EndresS03}
D.~M. Endres and J.~E. Schindelin.
\newblock A new metric for probability distributions.
\newblock {\em {IEEE} Trans. Information Theory}, 49(7):1858--1860, 2003.

\bibitem{ER59}
P.~Erd\H{o}s and A.~R\'{e}nyi.
\newblock {On random graphs}.
\newblock {\em Publicationes Mathematicae Debrecen}, 6:290--297, 1959.

\bibitem{gilbert1959}
E.~N. Gilbert.
\newblock Random graphs.
\newblock {\em Ann. Math. Statist.}, 30(4):1141--1144, 12 1959.

\bibitem{DBLP:books/daglib/0086372}
G.~H. Golub and C.~F.~V. Loan.
\newblock {\em Matrix computations {(3.} ed.)}.
\newblock Johns Hopkins University Press, 1996.

\bibitem{hall2013quantum}
B.~C. Hall.
\newblock {\em Quantum theory for Mathematicians}, volume 267.
\newblock Springer, 2013.

\bibitem{Jaeger2007}
G.~Jaeger.
\newblock {\em Quantum {i}nformation: {A}n {O}verview}.
\newblock Springer-Verlag New York, New York, NY, 2007.

\bibitem{2018arXiv180101072K}
E.-M. {Kontopoulou}, G.-P. {Dexter}, W.~{Szpankowski}, A.~{Grama}, and
  P.~{Drineas}.
\newblock {Randomized Linear Algebra Approaches to Estimate the Von {N}eumann
  Entropy of Density Matrices}.
\newblock {\em ArXiv e-prints}, January 2018.

\bibitem{8461400}
S.~Liu, P.~Chent, I.~Rajapakse, and A.~Hero.
\newblock First-order bifurcation detection for dynamic complex networks.
\newblock In {\em Proc. IEEE Int. Conf. Acoustics Speech Signal Process.
  (ICASSP)}, pages 6912--6916, April 2018.

\bibitem{2018arXiv180907533M}
G.~Minello, L.~Rossi, and A.~Torsello.
\newblock {On the Von Neumann Entropy of Graphs}.
\newblock {\em arXiv e-prints}, page arXiv:1809.07533, Sep 2018.

\bibitem{OPSAHL2013159}
T.~Opsahl.
\newblock Triadic closure in two-mode networks: Redefining the global and local
  clustering coefficients.
\newblock {\em Social Networks}, 35(2):159 -- 167, 2013.
\newblock Special Issue on Advances in Two-mode Social Networks.

\bibitem{OPSAHL2009155}
T.~Opsahl and P.~Panzarasa.
\newblock Clustering in weighted networks.
\newblock {\em Social Networks}, 31(2):155 -- 163, 2009.

\bibitem{nr}
R.~A. Rossi and N.~K. Ahmed.
\newblock The network data repository with interactive graph analytics and
  visualization.
\newblock In {\em Proceedings of the Twenty-Ninth AAAI Conference on Artificial
  Intelligence}, 2015.

\bibitem{8027126}
H.~Sadreazami, A.~Mohammadi, A.~Asif, and K.~N. Plataniotis.
\newblock Distributed-graph-based statistical approach for intrusion detection
  in cyber-physical systems.
\newblock {\em IEEE Transactions on Signal and Information Processing over
  Networks}, 4(1):137--147, March 2018.

\bibitem{7424294}
P.~K. Sharma, G.~Holness, Y.~Markushin, and N.~Melikechi.
\newblock A family of chisini mean based jensen-shannon divergence kernels.
\newblock In {\em Int. Conf. Mach. Learn. Applications (ICMLA)}, pages
  109--115, Dec 2015.

\bibitem{watts1998collective}
D.~J. Watts and S.~H. Strogatz.
\newblock Collective dynamics of ‘small-world’ networks.
\newblock {\em Nature}, 393:440--442, June 1998.

\bibitem{WICKER20132331}
N.~Wicker, C.~H. Nguyen, and H.~Mamitsuka.
\newblock A new dissimilarity measure for comparing labeled graphs.
\newblock {\em Linear Algebra Appl.}, 438(5):2331 -- 2338, 2013.

\bibitem{Yanardag}
P.~Yanardag and S.~V.~N. Vishwanathan.
\newblock A structural smoothing framework for robust graph-comparison.
\newblock In {\em Proceedings of the 28th International Conference on Neural
  Information Processing Systems - Volume 2}, NIPS'15, pages 2134--2142,
  Cambridge, MA, USA, 2015. MIT Press.

\bibitem{8434737}
C.~Ye, R.~C. Wilson, and E.~R. Hancock.
\newblock Network analysis using entropy component analysis.
\newblock {\em Journal of Complex Networks}, 6(3):404--429, Sep 2017.

\end{thebibliography}

\end{document}